\def\BibTeX{{\rm B\kern-.05em{\sc i\kern-.025em b}\kern-.08em
    T\kern-.1667em\lower.7ex\hbox{E}\kern-.125emX}}
\newcommand{\bxi}{\bm{\xi}}
\newcommand{\bpsi}{\bm{\psi}}
\newcommand{\bmu}{\bm{\mu}}
\newcommand{\btau}{\bm{\tau}}
\newcommand{\e}{\mathbb{E}}
\newcommand{\bmf}{\boldsymbol{\mathcal{F}}}
\newcommand{\bfi}{\bm{\phi}}
\newcommand{\dkl}{D_{\textup{KL}}}
\renewcommand{\qed}{$\hfill\blacksquare$}
\DeclareMathOperator{\asc}{\xrightarrow{\textup{a.s.}}}
\DeclareMathOperator{\asceq}{\stackrel{\textup{a.s.}}{=}}
\DeclareMathOperator{\T}{\mathsf{T}}
\newtheorem{theorem}{Theorem}
\newtheorem{lemma}{Lemma}
\newtheorem{proposition}{Proposition}
\newtheorem{corollary}{Corollary}
\newtheorem{assumption}{Assumption}
\newtheorem{definition}{Definition}
\begin{document}

\title{Social Opinion Formation and Decision Making Under Communication Trends}

\author{\vspace{0.8em}\IEEEauthorblockN{Mert Kayaalp, Virginia Bordignon, Ali H. Sayed \vspace{0.5em}\\
\textit{\'{E}cole Polytechnique F\'{e}d\'{e}rale de Lausanne (EPFL)}}
    \thanks{The authors are with Adaptive Systems Laboratory, EPFL. Emails: \{mert.kayaalp, virginia.bordignon, ali.sayed\}@epfl.ch. This work was supported in part by SNSF grant 205121-184999.}}

\maketitle

\begin{abstract}
  This work studies the learning process over social networks under partial and random information sharing. In traditional social learning models, agents exchange full belief information with each other while trying to infer the true state of nature. We study the case where agents share information about only one hypothesis, namely, the trending topic, which can be randomly changing at every iteration. We show that agents can learn the true hypothesis even if they do not discuss it, at rates comparable to traditional social learning. We also show that using one's own belief as a prior for estimating the neighbors' non-transmitted beliefs might create opinion clusters that prevent learning with full confidence. This phenomenon occurs when a single hypothesis corresponding to the truth is exchanged exclusively during all times. Such a practice, however, avoids the complete rejection of the truth under any information exchange procedure --- something that could happen if priors were uniform.
  
\end{abstract}

\begin{IEEEkeywords}
social learning, distributed inference, distributed hypothesis testing, diffusion strategy, trending topics, partial information sharing
\end{IEEEkeywords}

\section{Introduction}
\IEEEPARstart{S}{ocial} learning \cite{chamley_2003,djuric2012,chamley2013,krishnamurthy_2013,mossel2017opinion} models opinion formation and decision making by agents connected by a graph topology. In these models, agents observe data and interact with their neighbors in order to infer the true state of nature from a finite set of hypotheses. In a behavioral context, for example, voters may seek to agree on the best political representative among a set of candidates \{A, B, C, D\}, based on their personal biases as well as on their interactions over a social network. Other examples arise in the context of engineering systems: A sensor network may be cooperating to detect whether the weather is sunny or rainy or to classify imagery captured from a common scene \cite{bordignon2021learning}.

\begin{figure}[ht]
\centerline{\includegraphics[width=0.9\linewidth]{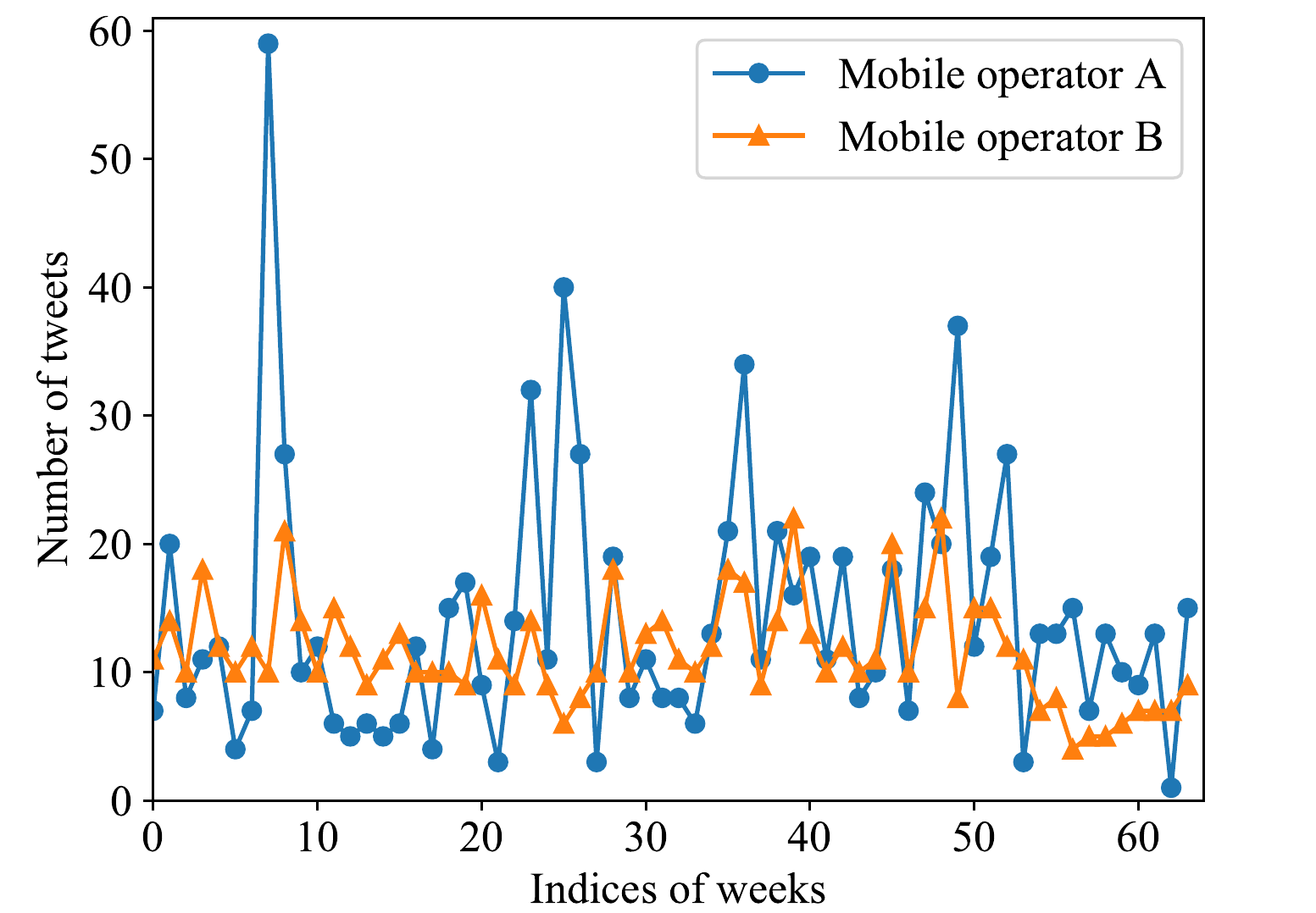}}
\caption{Number of tweets about two mobile operators for each week. (Mobile operator A- Swisscom, B- SunRise). Total of 64 weeks between 01/01/2021 and 30/04/2022, tweets in Switzerland.}
\label{fig:mobile_providers}
\end{figure}

In this work, our focus is on \emph{locally} Bayesian (a.k.a non-Bayesian) social learning strategies \cite{jadbabaie_2012,zhao_2012,nedic_2017,lalitha_2018,parasnis2020non}. While \emph{fully} Bayesian social learning \cite{acemoglu_2011} seeks to form the global Bayesian posterior, it necessitates extensive knowledge about other agents, such as their likelihood functions and network topology. Even under this extensive knowledge, achieving the global Bayesian posterior is known to be NP-hard \cite{hkazla2021bayesian}. In contrast, locally Bayesian social learning strategies rely solely on the localized processing of data and on localized interactions, and have been shown to allow the inference of the true state of nature \cite{jadbabaie_2012,zhao_2012}. Furthermore, these strategies are better suited to real-world scenarios compared to fully Bayesian strategies \cite{acemoglu_2011,hkazla2021bayesian} for at least two reasons. First, from a behavioral perspective, non-Bayesian strategies agree with the theory of bounded rationality in human decision making \cite{simon1990bounded,conlisk1996bounded}. Second, from an engineering perspective, they allow fully decentralized designs with moderate complexity and efficient memory \cite{jadbabaie_2012}. 

Non-Bayesian social learning algorithms repeatedly execute the following two steps: $i)$ agents update their \emph{local} beliefs, based on personal observations; $ii)$ agents combine their neighbors’ beliefs using a weighted averaging scheme like consensus \cite{degroot1974} or diffusion \cite{sayed_2014}. An implicit assumption common to these models is that agents are willing to share with neighbors their \emph{full} belief vector. That is, they share their beliefs about all possible hypotheses. In the context of social networks, this can be an unrealistic assumption. For instance, oftentimes, Twitter users concentrate on particular topics that constitute Twitter Trends. If candidate A gave a recent press release, Twitter users will likely focus on A when exchanging opinions and ignore other candidates. Another example can be users trying to determine the best mobile operator --- see Fig.~\ref{fig:mobile_providers} displaying the evolution of discussions on Twitter over time about two operators in Switzerland. The communication trends in this case can change based on the campaigns and advertisements by the mobile service providers (rather than being fixed over time as assumed by \cite{bordignon2020partial}). Furthermore, in the context of engineering systems, transmitting partial beliefs rather than full beliefs enable the design of communication-efficient systems under limited resources. Motivated by these examples, we are interested in the case where social agents share information on a random hypothesis of interest at each iteration.  \\

\noindent \textbf{Contributions.} 
\begin{itemize}
    \item We propose a social learning algorithm, where agents share their beliefs on \emph{only one} randomly chosen hypothesis at each time instant --- see Section \ref{sec:algorithm_description}. As opposed to the prior work \cite{bordignon2020partial}, the hypothesis being exchanged between agents is allowed to change over time. Moreover, under this partial information sharing scheme, agents complete the missing components of the received beliefs by using their own beliefs.
    \item When a wrong hypothesis is exchanged with positive probability, we show that beliefs evaluated at that hypothesis decay exponentially. The decay rate is the same as the asymptotic learning rate of traditional social learning algorithms (Theorem \ref{theorem:asymptotic_rate}). As a result, if each wrong hypothesis is exchanged with positive probability, then, learning occurs with probability one (Corollary \ref{corollary:truth_learning}). 
    \item We develop new proof techniques to tackle the randomness in the combination policy stemming from the shared hypothesis. The constraint that the beliefs belong to the probability simplex couples the processes for different hypotheses, making the standard application of the strong-law-of-large-numbers nonfeasible. Thus, we utilize martingale arguments to handle the non-linearity.
    \item We provide a counter-example in Section \ref{sec:truth_sharing} to show that sharing information about the true hypothesis is not sufficient for truth learning with full confidence when agents use their own beliefs as a prior for other agents' beliefs.
    \item On the other hand, Theorem \ref{th:impossible_mislearning} states that agents will never discard the truth completely. Namely, their beliefs on the true hypothesis will never be zero, which also means that they will never be fully confident on a wrong hypothesis being the true hypothesis. This contrasts with the findings in \cite{bordignon2020partial}, where truth sharing is shown to lead to truth learning when agents adopt uniform priors for the missing components. However, with such an approach, agents might also become fully confident in an incorrect hypothesis \cite{bordignon2020partial}.
    \item We support the theoretical findings with simulations in Section \ref{sec:simulations}.
\end{itemize}

\section{Related Work}

The connections between the current work and the distributed decision-making literature can be grouped as follows. \\

\noindent \textbf{Partial information sharing}: Partial information is considered in the context of social learning in~\cite{bordignon2020partial}, where each agent shares a single hypothesis of interest, which is {\em fixed} over time. This setup does not cover the situation in which the hypothesis of interest might change, which is relevant in real social network dynamics. In our recurrent political example, users can switch the discussion topic from candidate A to candidate B, if suddenly a new event brings relevance to B. To represent this {\em randomness}, we choose to model the hypothesis of interest, or \emph{trending} hypothesis, as a random variable following some underlying probability mass function over the set of hypotheses. Another notable work on social learning with partial information is \cite{salhab2020}, which assumes that each agent shares an \emph{action} (i.e., a sample) that is determined with respect to its belief, which is a different problem than the current one. \\

\noindent \textbf{Limited communication over networks}: There are also works whose purpose is to design communication-efficient distributed decision-making systems. For instance, the work \cite{mitraEvent} studies event-triggered information sharing mechanisms. In contrast, we consider a \emph{random} choice of the trending hypothesis independent of the history of the environment. In \cite{paritoshAssignment}, distributing partial hypotheses sets for agents to track is examined, but we do not consider hypothesis assignment problems in this work. In addition to these, the work \cite{inan2022social} proposes communicating with one randomly sampled agent instead of all neighbors at each time instant. Moreover, quantizing the beliefs is possible and studied in \cite{toghani2021communication, mitra2021}. In our work, decreasing the communication burden on the nodes is instead achieved by transmitting partial beliefs. \\

\noindent \textbf{Estimation of hidden belief components}: When agents share partial beliefs, one is faced with the challenge of estimating the hidden/latent belief components. In \cite{bordignon2020partial}, the hidden beliefs are assumed to be uniform among non-transmitted components. Under some situations, this strategy can become fully confident on a wrong hypothesis. In this work, we propose an algorithm where agents \emph{bootstrap}. Namely, they utilize their own beliefs for estimating their neighbors' non-transmitted components. As it is shown in the sequel, this practice helps make the system more robust by precluding learning a wrong hypothesis. Note that the concurrent work \cite{cirillo2022} extends \cite{bordignon2020partial} for memory-aware strategies for the fixed transmitted hypothesis case. In contrast, we study dynamic hypotheses and the strategy we propose in this work estimates the hidden belief components in a different way. \\

\noindent \textbf{Network disagreement}: In traditional social learning where agents share their entire belief with each other, it is known that all agents eventually come to an agreement on the truth \cite{jadbabaie_2012,zhao_2012,nedic_2017,lalitha_2018}. However, a full agreement across agents is rarely encountered over  social networks in real life. As we show in Sec.~\ref{sec:truth_sharing}, the property that agents utilize their own information while filling hidden components of the received beliefs, make them \emph{conservative}. This can prevent agreement across the network as well as truth learning. Note that this phenomenon is different from the \emph{stubborn} behavior considered in \cite{acemoglu13, yildiz13,lena2019,vial2021local}, where the stubborn agents are non-typical agents that do not change their beliefs and possibly supply misinformation to the network. Other works that provide proofs for network disagreement include $i)$ the work \cite{bordignon_2021,kayaalp_dist_bayesian}, which considers social learning under dynamic state of nature, and also $ii)$ the works \cite{blondel09,acemoglu2021}, which studies homophilic agents.   \\

\noindent \textbf{Decentralized optimization}: A related line of work is decentralized optimization over networks which has a rich literature (see e.g., \cite{nedic2009distributed,sayed_proc2014,sayed_2014,dimakis2010_gossip,sayed_2023}). Several scenarios for networked agents behavior are considered in the literature, including asynchronous behavior \cite{rabbat2014_asynchronous,zhao_2015_p1,zhao_2015_p2,zhao_2015_p3,lian_2018, bedi2019_asynchronous}, quantized communication \cite{cao_2020,hanna2021_quantization}, and randomized partial adaptation based on coordinate descent algorithms \cite{ghassemi_2015,wang2017_coordinate}. However, our study is on social learning, which amounts to a different problem formulation. For example, within a deep learning context, decentralized optimization corresponds to collaborative training of neural networks using labeled data, whereas social learning corresponds to a collective inference process, where agents aim to classify sequential and unlabeled data under partial and heterogeneous information \cite{hu2023_performance}. Another important concept from the optimization domain is the stochastic mirror descent algorithm \cite{yuan2018_mirror,sayed_2023} whose connection to canonical social learning algorithms is studied in \cite{nedic2016}. In Appendix~\ref{ap:mirror_descent}, we provide the connection to mirror descent for the current partial information sharing setting. \\ 
 
\noindent \textbf{Notation.} We denote random variables in bold, e.g., \( \bm{x}\). The expectation with respect to the distribution of $\bm{x}$ is denoted by $\e_x$. We use $i$ to denote time steps as opposed to some prior works that use it for agent indices. Almost sure convergence of a sequence of random variables \( \{ \bm{x}_i\}\) to a random variable \( \bm{x}\) is denoted by
\begin{equation}
    \lim_{i \to \infty} \bm{x}_i \asceq \bm{x}.
\end{equation}
When clear from the context, we simply write \( \bm{x}_i \asc \bm{x} \). $\dkl (p || q)$ represents the Kullback-Leibler (KL) divergence of two distributions $p$ and $q$. All-ones vector of dimension $K$ is denoted by $\mathds{1}_K$. The operator $\mathbb{I}$ denotes the indicator function such that when the Boolean condition $B$ is true, then $\mathbb{I}\{B\} = 1$. Otherwise, $\mathbb{I}\{B\} = 0$.

\section{Preliminaries}

\subsection{Problem Setting}
We consider the setting in which \( K \) agents over a network $\mathcal{N}$ try to distinguish the true state of nature \( \theta^\circ \in \Theta \), from among a finite set of \( H \) hypotheses, collected into the set \(\Theta = \{1,2,\dots,H \} \). The confidence of each agent \( k \) on \( \theta \in \Theta \) being the true hypothesis at time instant \( i \) is denoted by the belief \( \mu_{k,i} ( \theta ) \). The belief vector \( \mu_{k,i} \) is assumed to be a probability mass function (pmf) over all hypotheses $\Theta$. At each time instant \( i \), each agent \( k \) receives a private observation \( \bxi_{k,i} \) that is partially informative about the true state \( \theta^\circ \). This observation is distributed according to some marginal distribution \( L_k (\cdot | \theta^\circ ) \), with all observations assumed to be independent and identically distributed (i.i.d.) over time. However, observations can be dependent across different agents. As in other works on social learning, we assume that agents know their own marginal likelihood functions, denoted by \( L_k (\cdot | \theta)\) for all \( \theta \in \Theta \). The models by agent $k$ are only known to that agent. To ensure that the models share the same support with the unknown true distribution, it is assumed that:
\begin{align}\label{eq:assumption_finite_kl}
    \dkl ( L_k (\cdot | \theta^\circ ) ||  L_k (\cdot | \theta )) < \infty
\end{align}
for all agents and hypotheses, i.e., for all $k\in\mathcal{N}$ and $\theta\in\Theta$. This assumption ensures that each observation has finite informativeness about the true state. \\

\noindent \textbf{Network topology:} Agents are assumed to be connected by a strongly-connected graph \cite{sayed_2014} to which we associate a combination matrix $ A \triangleq [a_{\ell k}]$ --- see Fig.~\ref{fig:network_main}. In other words, there exists a path between any pair of agents \( (\ell,k )\), and there is at least one agent with a positive self-loop, say, $a_{mm}>0$ for some agent $m$. This implies that \( A \) is a primitive matrix \cite{sayed_2014}. Each entry \( a_{\ell k}\) of the matrix $A$ is nonnegative and represents the weight agent \( k \) assigns to the information it receives from agent \( \ell \). The scalar \( a_{\ell k}\) is nonzero if, and only if, agent \( \ell \) is a neighbor of agent \( k \), i.e., \( \ell \in \mathcal{N}_k \). The matrix \( A \) is assumed to be left stochastic; i.e., the entries on each of its columns add up to 1. In other words,
\begin{equation}
\mathds{1}_{K}^{\T}A=\mathds{1}_{K}^{\T},\quad a_{\ell k} \geq 0, \quad a_{\ell k }=0 \iff \ell\notin\mathcal{N}_k. 
\end{equation}
It follows from the Perron-Frobenius theorem \cite{sayed_2014,sayed_2023} that $A$ has an eigenvector corresponding to the eigenvalue at 1, and all entries of this eigenvector are positive and add up to one, i.e.,
\begin{equation}\label{eq:perron_def}
Av = v,\quad \mathds{1}_{K}^{\T} v = 1,\quad v\succ 0,
\end{equation}
where $\succ$ denotes an element-wise inequality. We refer to $v$ as the Perron vector of $A$. Its $k$-th entry \(v_k\) is a measure of the centrality of agent $k$.

\begin{figure}[ht]
\centerline{\includegraphics[width=0.65\linewidth]{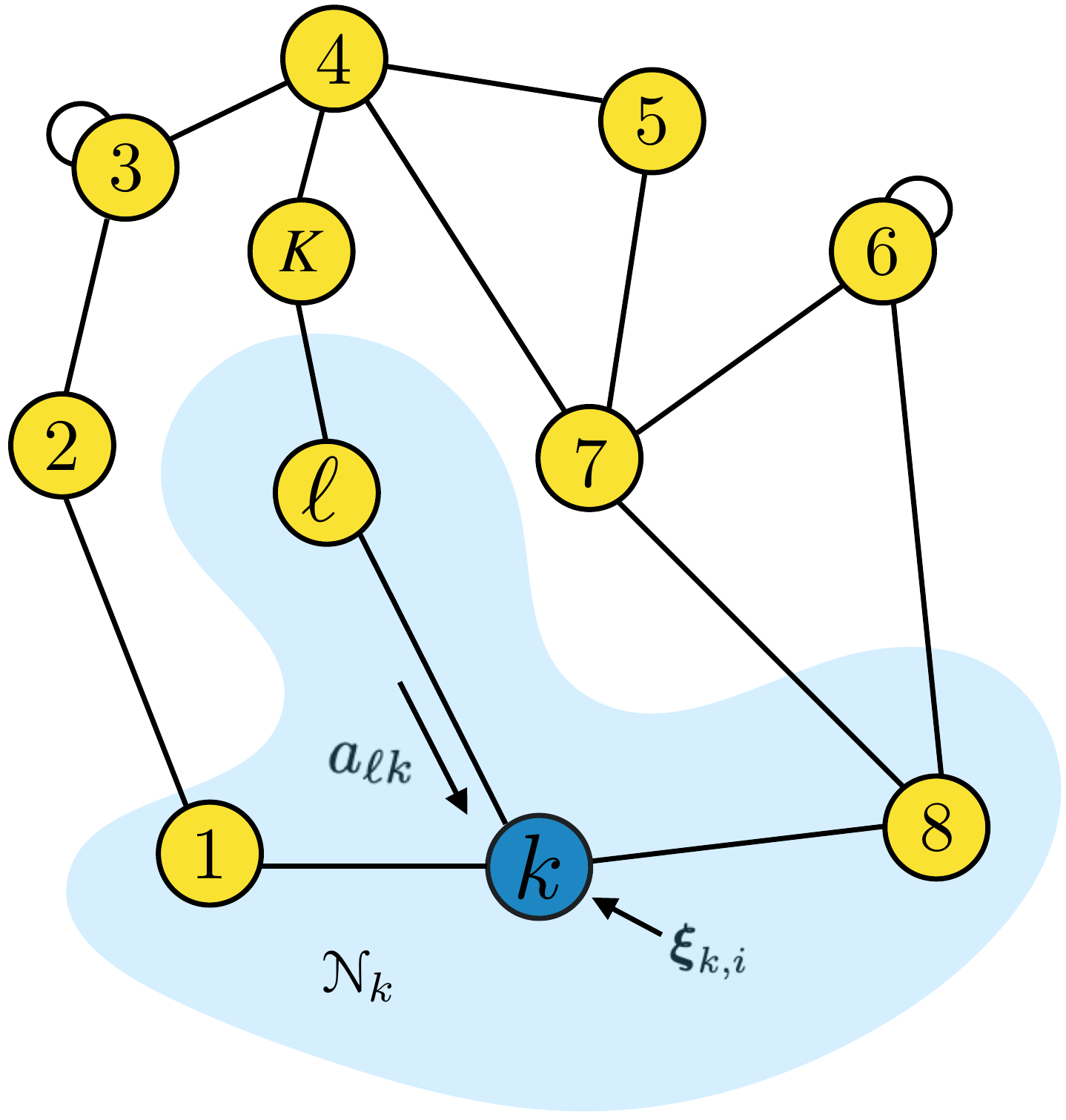}}
\caption{An illustration of the network model. While the network is in general directed, for visual simplicity, the edges are represented with lines instead of arrows in this figure.}
\label{fig:network_main}
\end{figure}

In this work, we are interested in whether individual agents can distinguish the true state $\theta^\circ$ from other hypotheses. We distinguish three possible scenarios on this matter.
\begin{definition}[\textbf{Learning and mislearning}]\label{def:learning_mis}
We say that truth learning occurs whenever:
\begin{equation}
    \bmu_{k,i} (\theta^\circ) \asc 1, \qquad \forall k \in \mathcal{N}
\end{equation}
In other words, truth learning means that agents become fully confident on the true hypothesis with probability one. In any other scenario we say that there is no learning. Among the cases where there is no learning, we define total mislearning as corresponding to the case in which for some \( \theta \in \Theta \setminus \{ \theta^\circ \}\):
\begin{equation}
     \bmu_{k,i} (\theta) \asc 1, \qquad \forall k \in \mathcal{N}
\end{equation}
In this case, agents become fully confident on a wrong hypothesis, with probability one.\qed
\end{definition}

We emphasize that according to Definition~\ref{def:learning_mis}, learning and mislearning occur whenever agents become fully confident on some hypothesis. In traditional Bayesian learning, the belief incorporates knowledge from an increasing number of measurements, with the expectation that the belief is not only maximized around the truth but also concentrates its mass on it. Any other outcome would imply a system malfunction. The same asymptotic learning outcome is expected in our scenario as well. An infinite amount of evidence should lead to certainty. An alternative definition of learning could be to require that the agents' beliefs are maximized at the true hypothesis \cite{bordignon_2021}. We adopt the former definition for consistency with other traditional works in the field.

Before presenting the main contributions of this work, for the benefit of reader, we briefly review the traditional social learning paradigms with full and partial information sharing \cite{jadbabaie_2012,zhao_2012,nedic_2017,lalitha_2018,bordignon2020partial}.

\subsection{Full Information Sharing}

In non-Bayesian social learning, at every iteration, agents first incorporate their personal observations to their beliefs. To do so, they perform a \emph{local} Bayesian update and obtain intermediate beliefs:
\begin{align}\label{eq:local_bayesian}
    \bpsi_{k,i} (\theta) = \frac{L_k (\bxi_{k,i} | \theta ) \bmu_{k,i-1}(\theta)}{\sum_{\theta^\prime \in \Theta}L_k (\bxi_{k,i} | \theta^\prime ) \bmu_{k,i-1}(\theta^\prime)}.
\end{align}
Subsequently, they exchange the intermediate beliefs with their immediate neighbors in the graph for \emph{all} hypotheses. Each agent then averages geometrically the received intermediate beliefs from its neighbors to obtain the updated belief:
\begin{equation}\label{eq:trad_soc_combine}
    \bmu_{k,i}(\theta) = \frac{\prod\limits_{\ell \in \mathcal{N}_k} (\bpsi_{\ell,i} (\theta))^{a_{\ell k}} }{\sum\limits_{\theta^\prime \in \Theta} \prod\limits_{\ell \in \mathcal{N}_k} (\bpsi_{\ell,i} (\theta^\prime))^{a_{\ell k}} },
\end{equation}
for each \( \theta \in \Theta\). In \eqref{eq:trad_soc_combine}, the term in the numerator is a weighted geometric average of the received intermediate beliefs, and the denominator is a normalization term that makes sure the belief vector \( \bmu_{k,i} \) is a pmf over \(\Theta\). Note that there are variations where the combination step \eqref{eq:trad_soc_combine} uses an arithmetic averaging operation, as opposed to geometric averaging \cite{jadbabaie_2012,zhao_2012}. These two averaging strategies possess distinct properties \cite{koliander2022} and the choice of using which for social learning is dependent upon the specific application under consideration. However, in general, it is more common to use geometric averaging for fusing probability density functions, while arithmetic averaging is typically favored for combining point estimates of random variables \cite{li2019second}. Moreover, in the context of social learning, the work \cite{kayaalp2022_aa_ga} demonstrated that geometric averaging yields a faster rate learning compared to the arithmetic averaging. Therefore, in this work, we focus on geometric averaging strategies as in \cite{nedic_2017,lalitha_2018}.

To avoid trivial cases where agents discard some hypotheses right from the start, the following condition is necessary.
\begin{assumption}[\bf{Initial beliefs}]\label{as:positive_initial_beliefs}
All initial beliefs are strictly positive at all hypotheses, i.e., for each agent \( k \) and for all hypotheses \( \theta \in \Theta \), \( \bmu_{k,0} (\theta) > 0\).
\qed
\end{assumption}
\noindent If Assumption~\ref{as:positive_initial_beliefs} is violated for a hypothesis \(\theta\), due to the nature of the geometric fusion rule, all agents would have zero belief on \(\theta\) in finite time. Moreover, the following condition enables the aggregate of all agents to distinguish the true hypothesis from the wrong ones.
\begin{assumption}[\textbf{Global identifiability}]\label{as:global_iden}
For each wrong hypothesis \( \theta \neq \theta^\circ \), there exists at least one agent \( k \) that can distinguish \( \theta \) from \( \theta^\circ \), namely,
\begin{align}
    \dkl\Big (  L_k (\cdot | \theta^\circ) \big |\big| L_k (\cdot | \theta) \Big ) > 0
\end{align}\qed
\end{assumption}
\noindent Note that this is a weaker assumption than \emph{local} identifiability, which requires \emph{all} agents (not only one) to have the capability of distinguishing the truth individually. Under Assumptions \ref{as:positive_initial_beliefs} and \ref{as:global_iden}, agents can learn the truth with full confidence.
\begin{proposition}[\textbf{Learning with full communication \cite{nedic_2017,lalitha_2018}}]\label{prop:full_com_learning}
Under Assumptions \ref{as:positive_initial_beliefs} and \ref{as:global_iden}, the social learning algorithm \eqref{eq:local_bayesian}--\eqref{eq:trad_soc_combine} allows each agent $k$ to learn the truth with probability one, i.e.,
\begin{equation}\label{eq:ga_standard_truth_learning}
    \bmu_{k,i} (\theta^\circ) \asc 1.
\end{equation}
Furthermore, the asymptotic learning of the truth occurs at an exponential rate. That is to say, the asymptotic convergence rate of the beliefs on wrong hypotheses $\theta \neq \theta^\circ$ to zero is given by
\begin{equation}\label{eq:ga_standard_rate}
       \frac{1}{i} \log \frac{ \bmu_{k,i}(\theta)}{ \bmu_{k,i}(\theta^\circ)} \asc \sum_{k = 1}^K  -v_k \dkl\Big (  L_k (\cdot | \theta^\circ) \big |\big| L_k (\cdot | \theta) \Big ).
\end{equation}
\end{proposition}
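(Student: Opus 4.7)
The plan is to linearize the recursion by passing to log-belief ratios and then exploit the primitivity of $A$ together with a strong law of large numbers argument. Define the log-belief ratios $\bm{\lambda}_{k,i}(\theta) \triangleq \log \frac{\bmu_{k,i}(\theta^\circ)}{\bmu_{k,i}(\theta)}$ and the log-likelihood ratios $\bm{x}_{k,i}(\theta) \triangleq \log \frac{L_k(\bxi_{k,i}|\theta^\circ)}{L_k(\bxi_{k,i}|\theta)}$. The local Bayesian step \eqref{eq:local_bayesian} makes the intermediate log-ratio $\log(\bpsi_{k,i}(\theta^\circ)/\bpsi_{k,i}(\theta))$ equal to $\bm{\lambda}_{k,i-1}(\theta) + \bm{x}_{k,i}(\theta)$ because the normalization cancels; the geometric combination \eqref{eq:trad_soc_combine} similarly eliminates the normalizer and becomes linear in the log-ratios. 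Assumption \ref{as:positive_initial_beliefs} guarantees that all these quantities are finite, so no singularity appears in finite time.

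Stacking agents into vectors $\bm{\lambda}_i(\theta), \bm{x}_i(\theta)\in\mathbb{R}^K$, the coupled update collapses to the affine recursion $\bm{\lambda}_i(\theta) = A^{\T}\bigl(\bm{\lambda}_{i-1}(\theta) + \bm{x}_i(\theta)\bigr)$. Unrolling yields
\begin{equation}
\frac{1}{i}\bm{\lambda}_i(\theta) \;=\; \frac{1}{i}(A^{\T})^{i}\bm{\lambda}_0(\theta) \;+\; \frac{1}{i}\sum_{t=1}^{i}(A^{\T})^{\,i-t+1}\bm{x}_t(\theta).
\end{equation}
The first summand vanishes deterministically. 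For the second, I would invoke Perron--Frobenius: since $A$ is primitive, left-stochastic with $Av=v$ and $\mathds{1}_K^{\T}A=\mathds{1}_K^{\T}$, the powers converge geometrically, $(A^{\T})^n \to \mathds{1}_K v^{\T}$. Together with the fact that $\{\bm{x}_t(\theta)\}$ is i.i.d.\ across $t$ with mean vector whose $k$-th entry equals $\dkl(L_k(\cdot|\theta^\circ)\|L_k(\cdot|\theta))$, a Cesaro/Toeplitz-type lemma combined with the strong law of large numbers gives, almost surely,
\begin{equation}
\frac{1}{i}\bm{\lambda}_{k,i}(\theta) \;\asc\; \sum_{\ell=1}^{K} v_\ell\,\dkl\bigl(L_\ell(\cdot|\theta^\circ)\,\|\,L_\ell(\cdot|\theta)\bigr).
\end{equation}
Flipping the sign yields exactly \eqref{eq:ga_standard_rate}.

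From here, truth learning is a short deduction. Assumption \ref{as:global_iden} ensures the limit above is strictly positive for every $\theta\neq\theta^\circ$, so $\bm{\lambda}_{k,i}(\theta)\to +\infty$ almost surely, which is equivalent to $\bmu_{k,i}(\theta)/\bmu_{k,i}(\theta^\circ)\to 0$. Because $\bmu_{k,i}$ is a pmf and the finite set $\Theta\setminus\{\theta^\circ\}$ has each entry tending to zero relative to $\bmu_{k,i}(\theta^\circ)$, summing over $\theta$ and using $\sum_{\theta}\bmu_{k,i}(\theta)=1$ forces $\bmu_{k,i}(\theta^\circ)\asc 1$, giving \eqref{eq:ga_standard_truth_learning}.

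I expect the main technical nuisance to be the Cesaro-style step: the summand $(A^{\T})^{i-t+1}\bm{x}_t(\theta)$ depends on $i$ through the matrix power, so one cannot apply SLLN directly to a fixed sequence. I would handle it by splitting the sum at some window $i-m$, using the geometric convergence $\|(A^{\T})^{n}-\mathds{1}_K v^{\T}\|\leq C\rho^{n}$ for some $\rho<1$ on the tail near $t=i$, applying SLLN to the bulk where $(A^{\T})^{i-t+1}$ is essentially $\mathds{1}_K v^{\T}$, and letting $m\to\infty$ after $i\to\infty$. The finite-KL condition \eqref{eq:assumption_finite_kl} ensures integrability of the $\bm{x}_t(\theta)$ that SLLN requires, and Assumption \ref{as:positive_initial_beliefs} keeps the initial log-ratio term finite so the boundary contribution is harmless.
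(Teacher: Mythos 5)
Your proposal is correct and follows essentially the same route as the paper's proof sketch: passing to log-belief ratios so the normalizations cancel, unrolling the linear recursion $\bm{\lambda}_i(\theta) = A^{\T}(\bm{\lambda}_{i-1}(\theta)+\bm{x}_i(\theta))$, invoking Perron--Frobenius convergence $A^i \to v\mathds{1}_K^{\T}$ to kill the initial-condition term and replace the matrix powers by $v\mathds{1}_K^{\T}$, and then applying the strong law of large numbers together with global identifiability and the pmf constraint to conclude \eqref{eq:ga_standard_rate} and \eqref{eq:ga_standard_truth_learning}. Your explicit treatment of the Ces\`aro/Toeplitz step (splitting the sum and using the geometric decay of $\|(A^{\T})^n - \mathds{1}_K v^{\T}\|$) is a valid and somewhat more careful rendering of a step the paper's sketch handles informally.
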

\begin{proof}
The full proof appears in \cite{nedic_2017,lalitha_2018}. In Appendix~\ref{ap:ful_com_learning}, we provide a brief sketch for the benefit of the reader. 
\end{proof}

 The future results stated in Theorem~\ref{theorem:asymptotic_rate} and Corollary~\ref{corollary:truth_learning} will show that agents can still learn the truth if they exchange beliefs on \emph{only one} random hypothesis at each time instant. The randomness in the shared hypothesis and the incomplete information received from the neighbors introduce new challenges into the analysis, in comparison to the proof of Proposition~\ref{prop:full_com_learning} above. This aspect is treated in later sections.
\subsection{Partial Information Sharing}
The canonical setting described in the previous section assumes that agents share their \emph{entire}
beliefs with each other at every iteration. The work \cite{bordignon2020partial} considers an alternative setting where agents share their beliefs on only one hypothesis of interest $\tau \in \Theta $, which is \emph{fixed} over time. In particular, after agents perform the local adaptation step \eqref{eq:local_bayesian}, each agent $k$ receives $\bpsi_{\ell,i} (\tau)$ from its neighbors $\ell\in\mathcal{N}_k$. Then, each agent $k$ completes the missing entries of the received belief vectors by using the construction:
\begin{align}\label{eq:partial_information}
         \widehat{\bpsi}_{\ell,i} (\theta) = \begin{dcases} 
      \bpsi_{\ell,i}(\tau), & \theta = \tau \\
      \frac{1-\bpsi_{\ell,i}(\tau)}{H-1}, & \theta \neq \tau
   \end{dcases}
\end{align}
for $\ell\in\mathcal{N}_k$. Observe that the received component of the intermediate belief is used as is, while the remaining components are assigned uniform weight. Then, these modified beliefs are used in the combination step:
\begin{equation}\label{eq:part_soc_combine}
    \bmu_{k,i}(\theta) = \frac{\prod\limits_{\ell \in \mathcal{N}_k} (\widehat{\bpsi}_{\ell,i} (\theta))^{a_{\ell k}} }{\sum\limits_{\theta^\prime \in \Theta} \prod\limits_{\ell \in \mathcal{N}_k} (\widehat{\bpsi}_{\ell,i} (\theta^\prime))^{a_{\ell k}} }.
\end{equation}
The work \cite{bordignon2020partial} provides a detailed analysis of this algorithm and proposes a self-aware variant of it where each agent $k$ uses $\bpsi_{k,i} (\theta)$ instead of the approximation $\widehat{\bpsi}_{k,i} (\theta)$ in \eqref{eq:part_soc_combine}. For the purposes of the current work, it is enough to state the following results, which we will refer to in the sequel. Let us introduce the following notation for the average likelihood function. The average is over the non-transmitted components (denoted by the symbol $\overline{\tau}$):
\begin{equation}
    L_k (\xi | \overline{\tau}) \triangleq \frac{1}{H-1} \sum_{\theta \in \Theta \setminus \{\tau\}} L_k (\xi | \theta).
\end{equation}
Then, the following two results are from \cite{bordignon2020partial}.
\begin{proposition}[\textbf{Learning with truth sharing \cite[Theorem 1]{bordignon2020partial}}]\label{prop:part_information_learning}
 If \( \tau = \theta^\circ\), under Assumption~\ref{as:positive_initial_beliefs} and a modified global identifiability assumption, i.e., if there exists an agent such that
    \begin{equation}
    \dkl\Big (  L_k (\cdot | \tau) \big |\big| L_k (\cdot | \overline{\tau}) \Big ) > 0,
    \end{equation}
then, truth learning occurs with full confidence:
\begin{equation}
     \bmu_{k,i} (\tau) \asc 1
\end{equation}
    for each agent $k$ under \eqref{eq:partial_information}--\eqref{eq:part_soc_combine}. \qed
\end{proposition}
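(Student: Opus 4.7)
The plan is to reduce the proposition to a binary-hypothesis problem and then to apply the same linear log-belief-ratio analysis that underlies Proposition~\ref{prop:full_com_learning}. The key structural observation is that the uniform completion in \eqref{eq:partial_information} assigns to every $\theta\neq\tau$ the \emph{same} value $(1-\bpsi_{\ell,i}(\tau))/(H-1)$. Since this quantity does not depend on the particular wrong hypothesis, the combination rule \eqref{eq:part_soc_combine} makes $\bmu_{k,i}(\theta)$ identical across all $\theta\neq\tau$ for every $i\geq 1$. Consequently, for $i\geq 1$ one has $\bmu_{k,i}(\theta)=(1-\bmu_{k,i}(\tau))/(H-1)$, and the entire belief vector is encoded by the scalar $\bmu_{k,i}(\tau)$.

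Having collapsed the state to a single scalar per agent, I would introduce the log-belief-ratio $\bm{\lambda}_{k,i}\triangleq \log\frac{\bmu_{k,i}(\tau)}{1-\bmu_{k,i}(\tau)}$. Plugging the symmetric form of $\bmu_{k,i}$ back into the local Bayes step \eqref{eq:local_bayesian} yields
\begin{equation}
\log\frac{\bpsi_{\ell,i+1}(\tau)}{1-\bpsi_{\ell,i+1}(\tau)}
\;=\;\log\frac{L_\ell(\bxi_{\ell,i+1}|\tau)}{L_\ell(\bxi_{\ell,i+1}|\overline{\tau})}+\bm{\lambda}_{\ell,i},
\end{equation}
and then the combination step \eqref{eq:part_soc_combine}, together with $\log\frac{\bmu_{k,i+1}(\tau)}{\bmu_{k,i+1}(\theta)}=\log(H-1)+\bm{\lambda}_{k,i+1}$, produces the purely linear recursion $\bm{\lambda}_{i+1}=A^{\T}\bm{\lambda}_{i}+A^{\T}\bm{x}_{i+1}$, where $\bm{x}_{\ell,i}\triangleq \log\frac{L_\ell(\bxi_{\ell,i}|\tau)}{L_\ell(\bxi_{\ell,i}|\overline{\tau})}$. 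The $\log(H-1)$ constants cancel because they are the same for every $\theta\neq\tau$.

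From here the argument is essentially the standard one. Iterating the recursion gives $\bm{\lambda}_{i}=(A^{\T})^{i}\bm{\lambda}_{0}+\sum_{j=1}^{i}(A^{\T})^{\,i-j+1}\bm{x}_{j}$. Since $A$ is primitive and left-stochastic, $(A^{\T})^{n}\to \mathds{1}v^{\T}$ geometrically, so each coordinate of the deterministic term vanishes and each coordinate of the random term behaves, for large $j$, like $\sum_{\ell}v_\ell\bm{x}_{\ell,j}$. The i.i.d.\ assumption over time and the finite-KL condition \eqref{eq:assumption_finite_kl} allow applying the strong law of large numbers to the Cesàro average, while a standard truncation handles the initial summands where $(A^{\T})^{\,i-j+1}$ has not yet converged. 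This yields
\begin{equation}
\frac{1}{i}\bm{\lambda}_{k,i}\;\asc\;\sum_{\ell\in\mathcal{N}} v_\ell\,\dkl\!\Big(L_\ell(\cdot|\tau)\,\big\|\,L_\ell(\cdot|\overline{\tau})\Big),
\end{equation}
which is strictly positive under the modified global identifiability assumption. Therefore $\bm{\lambda}_{k,i}\asc +\infty$, equivalently $\bmu_{k,i}(\tau)\asc 1$.

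The main obstacle in this plan is the first step: arguing cleanly that the uniform completion forces all non-$\tau$ beliefs to collapse to a common value after one combination round, so that the apparently $(H-1)$-dimensional belief simplex evolves on a one-dimensional manifold. Once this symmetry reduction is in place, the remaining analysis is linear, and the rest of the proof mirrors the $H=2$ case of Proposition~\ref{prop:full_com_learning} with KL divergences $\dkl(L_\ell(\cdot|\tau)\|L_\ell(\cdot|\theta))$ replaced by $\dkl(L_\ell(\cdot|\tau)\|L_\ell(\cdot|\overline{\tau}))$; positivity of at least one such divergence is exactly what the stated identifiability hypothesis delivers, and Assumption~\ref{as:positive_initial_beliefs} ensures that $\bm{\lambda}_{k,0}$ is finite so that the initial-condition term in the recursion is negligible after normalization by $i$.
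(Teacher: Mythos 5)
The paper never proves this proposition itself --- it is imported from \cite{bordignon2020partial} as a citation --- so your argument has to stand on its own, and it does. The key reduction is valid: because the completion \eqref{eq:partial_information} assigns the common value $(1-\bpsi_{\ell,i}(\tau))/(H-1)$ to every $\theta\neq\tau$, and because the combination rule \eqref{eq:part_soc_combine} uses $\widehat{\bpsi}_{\ell,i}$ for \emph{every} $\ell\in\mathcal{N}_k$, including $\ell=k$ (the proposition concerns precisely this non-self-aware rule; for the self-aware variant your collapse would fail), the combined belief $\bmu_{k,i}$ is automatically flat on $\Theta\setminus\{\tau\}$ for every $i\geq 1$ --- no induction is even needed. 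From there the problem is binary with the mixture likelihood $L_k(\cdot|\overline{\tau})$, the $\log(H-1)$ terms cancel by column-stochasticity of $A$, and your recursion $\bm{\lambda}_{i+1}=A^{\T}(\bm{\lambda}_i+\bm{x}_{i+1})$ is exactly the object treated in the sketch of Proposition~\ref{prop:full_com_learning} in Appendix~\ref{ap:ful_com_learning}, with $\dkl(L_\ell(\cdot|\tau)\,||\,L_\ell(\cdot|\overline{\tau}))$ in place of $\dkl(L_\ell(\cdot|\theta^\circ)\,||\,L_\ell(\cdot|\theta))$; positivity of the limit under the modified identifiability assumption then gives $\bm{\lambda}_{k,i}\asc+\infty$, i.e., $\bmu_{k,i}(\tau)\asc 1$. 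Two details worth tightening, neither of which breaks the argument: (i) the flattening holds only from $i\geq 1$, since $\bmu_{k,0}$ need not be symmetric, so the iterated form should start from $\bm{\lambda}_1$ rather than $\bm{\lambda}_0$ (the very first update is not of the stated linear form in $\bm{\lambda}_0$); as $\bm{\lambda}_1$ is a.s.\ finite this is immaterial after dividing by $i$; (ii) for the strong law you should note that the new log-likelihood ratio is integrable, which follows from \eqref{eq:assumption_finite_kl} together with the bound $L_\ell(\xi|\overline{\tau})\geq L_\ell(\xi|\theta)/(H-1)$, giving
\begin{equation}
0\;\leq\;\dkl\Big(L_\ell(\cdot|\tau)\,\big|\big|\,L_\ell(\cdot|\overline{\tau})\Big)\;\leq\;\log(H-1)+\min_{\theta\neq\tau}\dkl\Big(L_\ell(\cdot|\tau)\,\big|\big|\,L_\ell(\cdot|\theta)\Big)\;<\;\infty.
\end{equation}
With these noted, your proof is complete and is at the same level of rigor as the paper's own Appendix~\ref{ap:ful_com_learning} sketch.
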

Proposition~\ref{prop:part_information_learning} shows that if the hypothesis $\tau$ agents exchange information about happens to be the truth $\theta^\circ$, then this is sufficient for truth learning with full confidence. However, if they are not discussing the truth, this can lead to mislearning a wrong hypothesis as the next result illustrates.
\begin{proposition}[\textbf{Total mislearning \cite[Theorem 3]{bordignon2020partial}}]\label{prop:part_information_mislearning}
 If \( \tau \neq  \theta^\circ\), under Assumption~\ref{as:positive_initial_beliefs} and the condition that
    \begin{equation}
     \sum_{k=1}^K \!v_k \dkl\Big (\!  L_k (\cdot | \theta^\circ) \big |\big| L_k (\cdot | \tau) \!\Big ) \!<\! \sum_{k=1}^K \!v_k \dkl\Big ( \! L_k (\cdot | \theta^\circ) \big |\big| L_k (\cdot | \overline{\tau})\! \Big )
    \end{equation}
all agents learn a wrong hypothesis with full confidence:
\begin{equation}
     \bmu_{k,i} (\tau) \asc 1
\end{equation}
    for each agent $k$ under \eqref{eq:partial_information}--\eqref{eq:part_soc_combine}. \qed
\end{proposition}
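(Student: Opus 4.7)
My plan is to exploit the symmetry created by the uniform completion rule to collapse the $H$-hypothesis dynamics to a scalar log-ratio recursion. The key observation is that $\widehat{\bpsi}_{\ell,i}(\theta)=(1-\bpsi_{\ell,i}(\tau))/(H-1)$ does not depend on which wrong hypothesis $\theta\neq\tau$ one picks, so after the combination step \eqref{eq:part_soc_combine} the belief $\bmu_{k,i}(\theta)$ takes a common value $\overline{\bmu}_{k,i}$ for every $\theta\neq\tau$ and every $i\geq 1$. Setting $\bm{\lambda}_{k,i}\triangleq\log(\bmu_{k,i}(\tau)/\overline{\bmu}_{k,i})$, Assumption~\ref{as:positive_initial_beliefs} keeps this quantity finite, and it suffices to show $\bm{\lambda}_{k,i}\to+\infty$ almost surely, because then the identity $\bmu_{k,i}(\tau)=1/(1+(H-1)e^{-\bm{\lambda}_{k,i}})$ gives $\bmu_{k,i}(\tau)\asc 1$.

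Second, I would derive a linear recursion for the vector $\bm{\lambda}_i=[\bm{\lambda}_{1,i},\ldots,\bm{\lambda}_{K,i}]^{\T}$. Substituting the local update \eqref{eq:local_bayesian} into the completion rule and using the symmetry just established yields $\bpsi_{\ell,i}(\tau)/\widehat{\bpsi}_{\ell,i}(\theta) = e^{\bm{\lambda}_{\ell,i-1}}\,L_{\ell}(\bxi_{\ell,i}|\tau)/L_{\ell}(\bxi_{\ell,i}|\overline{\tau})$ for any $\theta\neq\tau$. Taking logarithms and combining according to \eqref{eq:part_soc_combine} gives
\begin{equation}
\bm{\lambda}_i=A^{\T}\bm{\lambda}_{i-1}+A^{\T}\bm{x}_i,\qquad \bm{x}_{\ell,i}\triangleq\log\frac{L_{\ell}(\bxi_{\ell,i}|\tau)}{L_{\ell}(\bxi_{\ell,i}|\overline{\tau})},
\end{equation}
which unrolls to $\bm{\lambda}_i=(A^{\T})^{i-1}\bm{\lambda}_{1}+\sum_{j=2}^{i}(A^{\T})^{i-j+1}\bm{x}_j$. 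The initial term $\bm{\lambda}_1$ absorbs the (generally non-uniform) initial beliefs $\bmu_{\ell,0}$; it is bounded by Assumption~\ref{as:positive_initial_beliefs} and therefore contributes $o(1)$ to $\bm{\lambda}_i/i$.

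Third, I would compute the almost-sure limit of $\bm{\lambda}_{k,i}/i$. Since $A$ is primitive and left stochastic, $(A^{\T})^{n}\to\mathds{1}v^{\T}$ at a geometric rate, i.e., $((A^{\T})^{n})_{k\ell}\to v_{\ell}$. Combining this exponential mixing with the i.i.d.\ structure of $\{\bm{x}_{\ell,j}\}_{j}$ (with finite mean by \eqref{eq:assumption_finite_kl}) via a weighted strong-law argument, I expect
\begin{equation}
\frac{1}{i}\bm{\lambda}_{k,i}\asc \sum_{\ell=1}^{K} v_{\ell}\,\e[\bm{x}_{\ell,1}] = \sum_{\ell=1}^{K} v_{\ell}\Big[\dkl(L_{\ell}(\cdot|\theta^{\circ})||L_{\ell}(\cdot|\overline{\tau})) - \dkl(L_{\ell}(\cdot|\theta^{\circ})||L_{\ell}(\cdot|\tau))\Big],
\end{equation}
where the expectation is under $\bxi_{\ell,1}\sim L_{\ell}(\cdot|\theta^{\circ})$. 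The hypothesis of the proposition is exactly the statement that this limit is strictly positive; hence $\bm{\lambda}_{k,i}\to+\infty$ almost surely, and the reduction of paragraph one delivers $\bmu_{k,i}(\tau)\asc 1$ for every agent $k$.

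The principal obstacle I anticipate is making the weighted SLLN rigorous: although $\{\bm{x}_j\}_j$ is i.i.d.\ across $j$, the matrix weights $((A^{\T})^{i-j+1})_{k\ell}$ depend jointly on $i$ and $j$, so no textbook SLLN applies verbatim. The cleanest workaround is to write $(A^{\T})^{n}=\mathds{1}v^{\T}+E_{n}$ with $\|E_{n}\|$ decaying geometrically, apply the plain SLLN to the stationary part (which produces the limit above), and control the transient contribution of the $E_n$-tail using Kronecker's lemma together with the moment bound implied by \eqref{eq:assumption_finite_kl}. This is essentially the same structural argument that underpins Proposition~\ref{prop:full_com_learning}, and it carries over to the present binary reduction without substantive modification.
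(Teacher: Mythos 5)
The paper itself gives no proof of this proposition: it is imported verbatim from \cite[Theorem 3]{bordignon2020partial} and stated without argument (only Proposition~\ref{prop:full_com_learning} gets a sketch, in Appendix~\ref{ap:ful_com_learning}), so there is no in-paper derivation to compare against. Judged on its own, your reconstruction is sound and follows the route that underlies both the cited result and the Appendix~\ref{ap:ful_com_learning} sketch. The key observation is correct: the uniform fill-in \eqref{eq:partial_information} makes $\widehat{\bpsi}_{\ell,i}(\theta)$ identical for all $\theta\neq\tau$, hence $\bmu_{k,i}(\theta)$ is constant over $\theta\neq\tau$ for every $i\geq 1$, and the dynamics collapse to the scalar log-ratio $\bm{\lambda}_{k,i}$; your identity $\widehat{\bpsi}_{\ell,i}(\tau)/\widehat{\bpsi}_{\ell,i}(\theta)=e^{\bm{\lambda}_{\ell,i-1}}L_\ell(\bxi_{\ell,i}|\tau)/L_\ell(\bxi_{\ell,i}|\overline{\tau})$ checks out (it uses the symmetry of $\bmu_{\ell,i-1}$, so the linear recursion holds for $i\geq2$, which your unrolling from $\bm{\lambda}_1$ already respects), and the limit $\sum_{\ell}v_\ell\big[\dkl(L_\ell(\cdot|\theta^\circ)||L_\ell(\cdot|\overline{\tau}))-\dkl(L_\ell(\cdot|\theta^\circ)||L_\ell(\cdot|\tau))\big]$ is exactly the quantity whose positivity is hypothesized, giving $\bmu_{k,i}(\tau)\asc 1$. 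Two points to tighten: $\bm{\lambda}_1$ is almost surely finite (a random variable), not bounded, which is all you need since it is divided by $i$; and the strong law needs $\e|\bm{x}_{\ell,1}|<\infty$, which does not follow from \eqref{eq:assumption_finite_kl} alone but does follow after the standard Jensen step $\e[(\log L_\ell(\bxi|\tau)/L_\ell(\bxi|\theta^\circ))^+]\leq \e[L_\ell(\bxi|\tau)/L_\ell(\bxi|\theta^\circ)]\leq 1$ together with $L_\ell(\cdot|\overline{\tau})\geq L_\ell(\cdot|\theta)/(H-1)$ for the $\overline{\tau}$ term. The transient you worry about is benign here: $(A^{\T})^{n}-\mathds{1}_K v^{\T}$ is a deterministic, geometrically decaying error, so a split of the sum (old indices killed by the geometric factor, recent indices by $\max_{j\leq i}|\bm{x}_{\ell,j}|/i\to0$) or your Kronecker-lemma route suffices; this is strictly easier than the random-matrix Ces\`{a}ro argument the paper needs for its Theorem~\ref{theorem:asymptotic_rate}.
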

In the algorithm we propose in this work (Alg.~\ref{alg:rand_inf_sharing}), agents will fill the missing beliefs using their own information, instead of assigning uniform values to them. This practice will result in outcomes that are opposite of Propositions \ref{prop:part_information_learning} and \ref{prop:part_information_mislearning}. Namely, we show in the following that exclusive truth sharing does not suffice for truth learning (Sec.~\ref{sec:truth_sharing}), while total mislearning can never occur (Theorem~\ref{th:impossible_mislearning}) under Algorithm~\ref{alg:rand_inf_sharing}. 

\section{Social Learning under Trending Topics} \label{sec:algorithm_description}
As opposed to existing works that require transmission of the entire beliefs at each iteration, or exchanging a fixed component of the beliefs, in this work we allow agents to share a random component at each iteration. Similar to \cite{bordignon2020partial}, each agent will continue to compute its intermediate belief $\bm{\psi}_{k,i}(\theta)$ for every possible $\theta\in\Theta$ according to \eqref{eq:local_bayesian}, and the agents will continue to share with their neighbors information about their belief for \emph{only one} of the hypotheses. However, the agents will be sharing information about the same \emph{trending} hypothesis \( \btau_i \in \Theta \). Here, $\btau_i$ is a random variable that depends on the time instant $i$, and is therefore allowed to change over time. The motivation for this setting is two-fold. First, people tend to concentrate on particular topics of discussion over social networks, as indicated earlier in the introduction. Second, if agents (e.g., sensors, robots) have the same random key, they can choose the same hypothesis at each iteration without needing a central controller, and then exchange beliefs on that hypothesis alone, as opposed to the more costly approach of exchanging entire beliefs. Utilizing randomness this way can prove useful compared to a periodic scheduling scheme for security reasons, since deciphering a random key is harder for eavesdroppers than inferring a fixed schedule. We denote the distribution of $\btau_i$ by \( \pi \), where $\pi$ is a probability vector over the set of hypotheses $\Theta$, and write \( \mathbb{P}( \btau_i = \theta)=\pi_\theta \). We further assume that $\btau_i$ is i.i.d. over time and also independent of all observations over space and time. Again, since the agents receive incomplete belief vectors from their neighbors (actually, they receive only one entry from these vectors), the agents will again need to complete the missing entries. In the proposed strategy, the agents use their own intermediate local beliefs to fill in for the missing beliefs from their neighbors by using the following construction. Namely, agent $k$ completes the belief vector received from its neighbor $\ell$ by using:
\begin{align}\label{eq:approximate_int_phi}
     \bfi^{(k)}_{\ell,i} (\theta) = \begin{dcases} 
      \bpsi_{\ell,i}(\theta), & \theta = \btau_i \\
     \bpsi_{k,i}(\theta), & \theta \neq \btau_i
   \end{dcases}.
\end{align}
By doing so, the entries of $\bfi^{(k)}_{\ell,i}$ need not add up to 1. For this reason, agent $k$ can normalize \eqref{eq:approximate_int_phi} according to:
\begin{equation}\label{eq:approximate_int}
    \widehat{\bpsi}^{(k)}_{\ell,i} (\theta) = \frac{\bfi^{(k)}_{\ell,i} (\theta)}{\sum_{\theta^\prime \in \Theta}\bfi^{(k)}_{\ell,i} (\theta^\prime)}.
\end{equation}
whose denominator can be written as 
\begin{equation}
    \sum_{\theta^\prime \in \Theta}\bfi^{(k)}_{\ell,i} (\theta^\prime) = 1-\bpsi_{k,i}(\btau_i)+\bpsi_{\ell,i}(\btau_i).
\end{equation}
We refer to (19)--(20) as a bootstrapping step. Subsequently, the agents combine the \emph{approximate} intermediate beliefs from \eqref{eq:approximate_int} to update their beliefs as in \eqref{eq:trad_soc_combine} and \eqref{eq:part_soc_combine}:
\begin{align}\label{eq:combine}
\bmu_{k,i}(\theta) = \frac{\prod\limits_{\ell \in \mathcal{N}_k} (\widehat{\bpsi}_{\ell,i}^{(k)} (\theta))^{a_{\ell k}} }{\sum\limits_{\theta^\prime \in \Theta} \prod\limits_{\ell \in \mathcal{N}_k} (\widehat{\bpsi}_{\ell,i}^{(k)} (\theta^\prime))^{a_{\ell k}} }.
\end{align}
    The procedure is summarized in Algorithm~\ref{alg:rand_inf_sharing}. Its connection to distributed stochastic mirror descent algorithms is explained in Appendix~\ref{ap:mirror_descent}. Observe that the algorithm is \emph{self-aware} because for each agent $k$, it follows from \eqref{eq:alg_approximate} that \( \widehat{\bpsi}^{(k)}_{k,i} (\theta) = \bpsi_{k,i} (\theta)\). In other words, agents use their own intermediate beliefs as is.

  \begin{algorithm}[]
 \caption{Social learning with trending hypothesis}
 \begin{algorithmic}[1]
    \item set initial priors $\bmu_{k,0}(\theta)>0$,  $\forall \theta \in\Theta$ and $\forall k \in \mathcal{N}$
    \While{$i\geq 1$} 
 \For{each agent $k$} 
 \State receive private observation $\bxi_{k,i}$ \State \textbf{adapt} to obtain intermediate belief:
\begin{equation}\label{eq:alg_adapt}
\bpsi_{k,i} (\theta) = \frac{L_k (\bxi_{k,i} | \theta ) \bmu_{k,i-1}(\theta)}{\sum_{\theta^\prime \in \Theta}L_k (\bxi_{k,i} | \theta^\prime ) \bmu_{k,i-1}(\theta^\prime)}
\end{equation}
\EndFor
\State agents exchange \( \{\bpsi_{k,i} (\btau_i)\} \) on the current hypothesis
\Statex \quad \:\:of interest $\btau_i \sim \pi$
 \For{each agent $k$} 
 \State \textbf{approximate} intermediate beliefs for $\ell \in \mathcal{N}_k$ by \textbf{bootstrapping}:
\begin{align}\label{eq:alg_approximate}
        \quad \widehat{\bpsi}^{(k)}_{\ell,i} (\theta) = \begin{dcases} 
      \frac{\bpsi_{\ell,i}(\theta)}{1-\bpsi_{k,i}(\btau_i)+\bpsi_{\ell,i}(\btau_i)}, & \theta = \btau_i \\
      \frac{\bpsi_{k,i}(\theta)}{1-\bpsi_{k,i}(\btau_i)+\bpsi_{\ell,i}(\btau_i)}, & \theta \neq \btau_i
   \end{dcases}
\end{align}
\State \textbf{combine} approximate beliefs: 
\begin{equation}\label{eq:alg_combine}
    \bmu_{k,i}(\theta) = \frac{\prod\limits_{\ell \in \mathcal{N}_k} (\widehat{\bpsi}_{\ell,i}^{(k)} (\theta))^{a_{\ell k}} }{\sum\limits_{\theta^\prime \in \Theta} \prod\limits_{\ell \in \mathcal{N}_k} (\widehat{\bpsi}_{\ell,i}^{(k)} (\theta^\prime))^{a_{\ell k}} }
\end{equation}
\EndFor
\State \( i \leftarrow i+1\)
\EndWhile
 \end{algorithmic} \label{alg:rand_inf_sharing}
 \end{algorithm}

Recall that in traditional non-Bayesian social learning \eqref{eq:local_bayesian}--\eqref{eq:trad_soc_combine}, entire belief vectors are exchanged and hence there is no approximation of the intermediate beliefs, that is to say,
\begin{align}
    \widehat{\bpsi}^{(k)}_{\ell,i} (\theta) = \bpsi_{\ell,i}(\theta).
\end{align}
In addition, in the fixed hypothesis sharing case \eqref{eq:partial_information}, there is a fixed transmitted hypothesis \( \btau_i = \tau \) and non-transmitted hypotheses are assumed to be uniformly likely. In contrast, in \eqref{eq:alg_approximate} agents exploit their own beliefs as prior information.

\section{Main Results}

\subsection{Truth Learning}\label{sec:truth_learning}

In this section, we present results that characterize truth learning under certain conditions. First, we define the following loss function in order to assess the disagreement between the truth and the belief of each agent \( k \) at time \( i \):
\begin{align}
    Q(\bmu_{k,i}) \triangleq \dkl (\delta_{\theta^\circ} || \bmu_{k,i}),
\end{align}
where we denote the true probability mass function as the Kronecker-delta function:
\begin{align}
         \delta_{\theta^\circ} (\theta) = \begin{dcases} 
      1, & \theta = \theta^\circ \\
      0, & \theta \neq \theta^\circ
   \end{dcases}.
\end{align}
Observe that
\begin{equation}
    Q(\bmu_{k,i}) = \sum_{\theta \in \Theta} \delta_{\theta^\circ} (\theta) \log \frac{\delta_{\theta^\circ} (\theta)}{\bmu_{k,i} (\theta)} = -\log \bmu_{k,i} (\theta^\circ),
\end{equation}
where we use the convention that \( 0\log0=0\). The network loss is defined as the weighted average of the individual loss functions, where the weighting is given by the Perron entries:
\begin{equation}
    Q(\bmu_i) \triangleq \sum_{k=1}^K v_k Q(\bmu_{k,i})  =-\sum_{k=1}^K v_k \log \bmu_{k,i} (\theta^\circ) .
\end{equation}
We denote the history of observations and transmitted hypotheses up to time $i$ by $\bmf_i~\triangleq~\{ \btau_i, \bxi_i,\btau_{i-1},\bxi_{i-1},\dots\}$, where we introduced the aggregate vector of observations $\bxi_i \triangleq \{ \bxi_{k,i}\}_{k=1}^K$. The following result shows that the conditional expectation of the network loss does not increase, given the history $\bmf_{i-1}$.

\begin{lemma}[\bf{Network average loss}]\label{lemma:submartingale}
The network loss \( Q(\bmu_i) \) is a super-martingale, namely,
\begin{align}\label{eq:martingale_lemma_eq}
    \e \Big [ Q(\bmu_{i}) \Big | \bmf_{i-1} \Big ] &\leq Q(\bmu_{i-1})
\end{align}
\end{lemma}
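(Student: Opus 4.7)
The plan is to work conditionally on $\bmf_{i-1}$, so that $\bmu_{k,i-1}$ is deterministic, and to prove the stronger \emph{pathwise} bound $Q(\bmu_i)\leq Q(\bpsi_i)\triangleq-\sum_k v_k \log\bpsi_{k,i}(\theta^\circ)$; taking the conditional expectation then reduces the problem to a Jensen argument on the Bayesian adaptation step. The Perron identity $\sum_k v_k a_{\ell k}=v_\ell$, paired with left-stochasticity $\sum_\ell a_{\ell k}=1$, is the workhorse that removes both the trending hypothesis $\btau_i$ and the intractable combine-step normalization from the analysis.

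\textbf{Combine step.} By weighted AM--GM, $\sum_{\theta'}\prod_\ell(\widehat{\bpsi}_{\ell,i}^{(k)}(\theta'))^{a_{\ell k}}\leq\sum_\ell a_{\ell k}\sum_{\theta'}\widehat{\bpsi}_{\ell,i}^{(k)}(\theta')=1$, so the denominator in \eqref{eq:alg_combine} is at most one, giving
\begin{equation*}
-\log\bmu_{k,i}(\theta^\circ)\leq-\sum_{\ell\in\mathcal{N}_k} a_{\ell k}\log\widehat{\bpsi}_{\ell,i}^{(k)}(\theta^\circ).
\end{equation*}
Splitting the normalization in \eqref{eq:alg_approximate} yields $-\log\widehat{\bpsi}_{\ell,i}^{(k)}(\theta^\circ)=-\log\bfi^{(k)}_{\ell,i}(\theta^\circ)+\log S_{k,\ell}$ with $S_{k,\ell}\triangleq 1-\bpsi_{k,i}(\btau_i)+\bpsi_{\ell,i}(\btau_i)$. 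Multiplying by $v_k$ and summing over $k$, the $\bfi$-piece collapses to $-\sum_k v_k\log\bpsi_{k,i}(\theta^\circ)=Q(\bpsi_i)$ in \emph{both} branches of \eqref{eq:alg_approximate}: when $\btau_i=\theta^\circ$ this uses $\sum_k v_k a_{\ell k}=v_\ell$, while when $\btau_i\neq\theta^\circ$ it uses $\sum_\ell a_{\ell k}=1$. For the $\log S_{k,\ell}$-piece, apply $\log(1+x)\leq x$ to obtain $\log S_{k,\ell}\leq\bpsi_{\ell,i}(\btau_i)-\bpsi_{k,i}(\btau_i)$, and observe that $\sum_k v_k\sum_\ell a_{\ell k}(\bpsi_{\ell,i}(\btau_i)-\bpsi_{k,i}(\btau_i))=0$ by the same Perron identity. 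Both cancellations hold pathwise in $\btau_i$, so $Q(\bmu_i)\leq Q(\bpsi_i)$ almost surely.

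\textbf{Adaptation step.} From \eqref{eq:alg_adapt}, $-\log\bpsi_{k,i}(\theta^\circ)=-\log\bmu_{k,i-1}(\theta^\circ)+\log\sum_{\theta'}\frac{L_k(\bxi_{k,i}|\theta')}{L_k(\bxi_{k,i}|\theta^\circ)}\bmu_{k,i-1}(\theta')$. Conditionally on $\bmf_{i-1}$ the observation is drawn from $L_k(\cdot|\theta^\circ)$ and $\e[L_k(\bxi|\theta')/L_k(\bxi|\theta^\circ)]=1$, so Jensen's inequality gives $\e[-\log\bpsi_{k,i}(\theta^\circ)\mid\bmf_{i-1}]\leq-\log\bmu_{k,i-1}(\theta^\circ)$, with integrability supplied by \eqref{eq:assumption_finite_kl}. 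Weighting by $v_k$, summing, and combining with the pathwise bound from the previous paragraph delivers \eqref{eq:martingale_lemma_eq}.

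\textbf{Main obstacle.} The hard part is the $S_{k,\ell}$ normalization: it couples $k$'s and $\ell$'s intermediate beliefs asymmetrically and depends on the random $\btau_i$, so the combine-step denominator admits no closed form. The unlocking trick is that $\log(1+x)\leq x$ linearizes $\log S_{k,\ell}$ into a difference that telescopes under Perron-weighted summation, while the same Perron identity absorbs the case-split of the bootstrapping rule---so both sources of randomness in the combine step vanish \emph{pathwise}, before one ever needs to invoke the distribution $\pi$ of the trending hypothesis.
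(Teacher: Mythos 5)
Your proof is correct, and while it shares the paper's skeleton (Jensen/AM--GM to bound the combine-step denominator by one, Perron-weighted summation to cancel the network terms, and KL-nonnegativity via Jensen for the adaptation step), the middle of the argument is organized in a genuinely different way. The paper first conditions on $\bmf_{i-1}$, splits $\e_{\xi_i}\e_{\tau_i}$ using the assumed independence of $\bxi_i$ and $\btau_i$, computes the $\btau_i$-average explicitly in terms of $\pi$, and only then kills the normalization terms $\log(1-\bpsi_{k,i}(\tau)+\bpsi_{\ell,i}(\tau))$ by a Jensen step over the joint weights $\pi_\tau v_k a_{\ell k}$. You instead prove the \emph{pathwise} domination $Q(\bmu_i)\leq Q(\bpsi_i)$ for every realization of $\btau_i$, by splitting off $\log S_{k,\ell}$ and linearizing it with $\log(1+x)\leq x$ (the tangent-line version of the same Jensen step), with the Perron identity $\sum_k v_k a_{\ell k}=v_\ell$ and column-stochasticity absorbing both branches of the bootstrapping rule and the residual $\bpsi_{\ell,i}(\btau_i)-\bpsi_{k,i}(\btau_i)$ terms. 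What your route buys is that the distribution $\pi$, the i.i.d. assumption on $\btau_i$, and the independence between $\btau_i$ and $\bxi_i$ never enter: the super-martingale property follows for arbitrary, even observation-dependent or time-varying, trending-topic mechanisms, which the paper only asserts in a closing remark (its own step $(c)$ invokes independence and would have to be re-argued). Two cosmetic points: $\e[L_k(\bxi_{k,i}|\theta')/L_k(\bxi_{k,i}|\theta^\circ)]$ equals one only when the supports match, but the inequality $\leq 1$ is all you need, exactly as in the paper's KL formulation; and integrability is not really supplied by \eqref{eq:assumption_finite_kl} but is unnecessary anyway, since the log-ratio in your adaptation step is bounded below by $\log\bmu_{k,i-1}(\theta^\circ)$, so the conditional expectation is well defined and the one-sided Jensen bound suffices.
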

\begin{proof}
See Appendix~\ref{ap:submartingale}.
\end{proof}

In view of this lemma, Algorithm~\ref{alg:rand_inf_sharing} leads to a robust design in the sense that given any particular network loss realization, the loss in the next iteration will not increase in expectation. Note that this result holds for any possible transmission distribution \(\pi\). Similar to Eq.~\eqref{eq:ga_standard_rate} for traditional social learning, we continue our presentation by focusing on the decay rate of the belief ratio for some wrong hypothesis \( \theta \in \Theta \setminus  \{ \theta^\circ\}\) under the proposed strategy \eqref{eq:alg_adapt}--\eqref{eq:alg_combine}.
\begin{theorem}[\textbf{Asymptotic learning rate}]\label{theorem:asymptotic_rate}
For each wrong hypothesis \( \theta \in \Theta \setminus \{ \theta^\circ \}\), if the transmission probability is strictly positive, i.e.,  \( \pi_\theta > 0 \), then the belief on that wrong hypothesis will converge to zero at an asymptotically exponential rate under Assumption~\ref{as:positive_initial_beliefs}. Namely, for each agent \( k \):
\begin{align}\label{eq:asym_learning_rate_ris}
        \frac{1}{i} \log \frac{ \bmu_{k,i}(\theta)}{ \bmu_{k,i}(\theta^\circ)} \asc \sum_{k = 1}^K  -v_k \dkl\Big (  L_k (\cdot | \theta^\circ) \big |\big| L_k (\cdot | \theta) \Big ).
\end{align}
\end{theorem}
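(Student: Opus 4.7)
My plan is to derive a vector recursion for the log-belief ratios $\lambda_{k,i} \triangleq \log\bigl(\bmu_{k,i}(\theta)/\bmu_{k,i}(\theta^\circ)\bigr)$, project onto the Perron direction to recover the stated rate via the strong law of large numbers, and then use a stochastic contraction argument to lift the conclusion from the weighted average to each individual agent. Setting $\ell_{k,i} \triangleq \log\bigl(L_k(\bxi_{k,i}|\theta)/L_k(\bxi_{k,i}|\theta^\circ)\bigr)$ and substituting \eqref{eq:alg_approximate} into \eqref{eq:alg_combine}, a case split on $\btau_i \in \{\theta, \theta^\circ, \text{other}\}$ gives, after canceling the normalizers using $\sum_\ell a_{\ell k}=1$,
\begin{equation*}
\lambda_i = B_i \lambda_{i-1} + B_i \ell_i + s_i c_i, \qquad c_i \triangleq (I - A^{\T}) \log \bpsi_i(\theta^\circ),
\end{equation*}
with $(B_i, s_i) = (I, 0)$ if $\btau_i \notin \{\theta,\theta^\circ\}$, $(I, +1)$ if $\btau_i = \theta^\circ$, and $(A^{\T}, -1)$ if $\btau_i = \theta$. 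The combination operator $A^{\T}$ enters only in the third case, whose probability is $\pi_\theta>0$ by hypothesis.

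Projecting along the Perron direction kills all case-dependent terms: the identities $v^{\T} A^{\T} = v^{\T}$ and $v^{\T} c_i = 0$ collapse the three cases into the single scalar recursion $v^{\T}\lambda_i = v^{\T}\lambda_{i-1} + v^{\T} \ell_i$. Telescoping and applying the strong law of large numbers to the i.i.d.\ sequence $\{\ell_j\}$, whose entries have finite mean by \eqref{eq:assumption_finite_kl}, yields
\begin{equation*}
\frac{1}{i} v^{\T}\lambda_i \asc v^{\T}\e[\ell_1] = -\sum_{k=1}^{K} v_k \dkl\bigl(L_k(\cdot|\theta^\circ) \,\|\, L_k(\cdot|\theta)\bigr),
\end{equation*}
which is exactly the target limit; notably this step does not yet use $\pi_\theta>0$.

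To promote this Perron-weighted statement to each agent, I must show that the disagreement $d_i \triangleq (I-\mathds{1} v^{\T})\lambda_i$ satisfies $d_i/i \asc 0$. Applying $I - \mathds{1}v^{\T}$ to the recursion and using that $A^{\T}$ commutes with this projector and contracts strictly on its range (because $A$ is primitive), $d_i$ obeys a random linear recursion whose operator is $A^{\T}$ with probability $\pi_\theta$ per step and is otherwise the identity, driven by additive noise made of $(I-\mathds{1}v^{\T})\ell_i$ and $\pm c_i$. By Borel--Cantelli, contractive steps occur at positive asymptotic density, so a random-product argument forces $d_i = o(i)$ provided the driving noise is $o(i)$.

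The main obstacle is the endogenous noise $c_i$, which contains $\log \bpsi_{k,i}(\theta^\circ)$ and is not a priori bounded. To break the apparent circularity (we would like to know that beliefs on $\theta^\circ$ stay away from $0$ in order to control $c_i$, but this is closely related to what we are trying to prove), I would invoke Lemma~\ref{lemma:submartingale}: the nonnegative supermartingale $Q(\bmu_i) = -\sum_k v_k \log \bmu_{k,i}(\theta^\circ)$ converges a.s.\ to a finite limit by Doob's theorem, so $\log \bmu_{k,i}(\theta^\circ)$ stays a.s.\ bounded below for every $k$. Together with the tail control on the likelihood ratios supplied by \eqref{eq:assumption_finite_kl}, this gives $\|c_i\| = o(i)$ almost surely, closing the disagreement estimate and hence the theorem.
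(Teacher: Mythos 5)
Your proposal is correct in substance and rests on the same ingredients as the paper's proof --- the random effective combination operator ($A^\T$ when $\btau_i=\theta$, identity otherwise), the strong law of large numbers for the rate, Borel--Cantelli for the random matrix products, and the (super/sub-)martingale structure behind Lemma~\ref{lemma:submartingale} to tame the endogenous $\log\bpsi_{k,i}(\theta^\circ)$ terms --- but it organizes them differently, and the difference is instructive. The paper expands the full recursion into products $\widetilde{\boldsymbol{A}}^{j \shortrightarrow i}$, proves $\widetilde{\boldsymbol{A}}^{j \shortrightarrow i}\asc v\mathds{1}_K^\T$ (Lemma~\ref{lemma:matrix_convergence}), and must carry the residual terms $\boldsymbol{R}_j^k$ through separate estimates (Lemmas~\ref{lemma:vanishing_norm} and~\ref{lemma:residual}). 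Your Perron projection makes the weighted average exact rather than asymptotic: $v^\T c_i=0$ annihilates the residual and $v^\T B_i=v^\T$ removes the communication randomness, so $v^\T\lambda_i$ is literally a random walk and the SLLN gives the target limit immediately, with the hypothesis $\pi_\theta>0$ isolated where it belongs, in the disagreement estimate $d_i=o(i)$. Two points in your sketch need to be firmed up, and both are exactly where the paper spends its effort. First, ``$A^\T$ contracts strictly on the range of $I-\mathds{1}_K v^\T$'' is a statement about the spectral radius, not about one application of $A^\T$ in an arbitrary norm; you need a power bound of the type $\|(A^\T)^m(I-\mathds{1}_K v^\T)\|\le C\lambda^m$ (the paper's \eqref{eq:eigenvalue_decrease}) together with a quantitative version of ``contractions occur at positive density'' --- in effect the expected-geometric-decay, Markov, Borel--Cantelli, Ces\`{a}ro chain of Lemmas~\ref{lemma:vanishing_norm}--\ref{lemma:residual}, since the damping of old noise is random. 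Second, your control of $c_i$ goes through $\log\bmu_{k,i}(\theta^\circ)$ via the statement of Lemma~\ref{lemma:submartingale} plus Doob, whereas $c_i$ involves the intermediate beliefs: a single Bayesian update can push $\log\bpsi_{k,i}(\theta^\circ)$ below $\log\bmu_{k,i-1}(\theta^\circ)$ by as much as $\max_{\theta^\prime}\log\big(L_k(\bxi_{k,i}|\theta^\prime)/L_k(\bxi_{k,i}|\theta^\circ)\big)$. This gap is i.i.d.\ over time with finite mean, hence $o(i)$ almost surely, so your claim $\|c_i\|=o(i)$ (and the ensuing convolution argument against geometric damping) can be made to work; the paper's route is tighter, showing that $\sum_k v_k\log\bpsi_{k,i}(\theta^\circ)$ is itself a non-positive sub-martingale (a by-product of the proof of Lemma~\ref{lemma:submartingale}), which yields $\sup_j\|\boldsymbol{\Psi}_j\|<\infty$ almost surely and feeds directly into the residual bound. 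With those two steps spelled out, your argument goes through and is arguably a cleaner presentation of the same proof.
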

\begin{proof}
See Appendix~\ref{ap:rate_convergence}.
\end{proof}
The convergence rate expression in \eqref{eq:asym_learning_rate_ris} is an average of the local KL-divergences of agents, which measures their individual informativeness, weighted by their network centralities (i.e., Perron entries). Observe that this rate matches the rate \eqref{eq:ga_standard_rate} for traditional social learning algorithms, which require transmission of full beliefs. Therefore, a positive probability of transmitting the wrong hypothesis suffices for achieving the same asymptotic performance with probability one, regardless of the transmission probabilities for the remaining hypotheses. Combining Theorem \ref{theorem:asymptotic_rate} and Assumption \ref{as:global_iden} directly yields the following sufficient conditions for truth learning.
\begin{corollary}[\textbf{Truth learning}]\label{corollary:truth_learning}
Under Assumptions \ref{as:positive_initial_beliefs} and \ref{as:global_iden}, if \( \pi_\theta > 0\) for all wrong hypotheses \(\theta \in \Theta \setminus \{ \theta^\circ \} \), then each agent \( k \) learns the truth with probability one, i.e.,
\begin{align}
    \bmu_{k,i}(\theta^\circ) \asc 1.
\end{align}\qed
\end{corollary}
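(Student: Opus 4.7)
The corollary follows directly from combining Theorem~\ref{theorem:asymptotic_rate} with the global identifiability Assumption~\ref{as:global_iden}, so the plan is essentially a careful bookkeeping exercise rather than a substantive new argument.

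First I would fix an arbitrary wrong hypothesis $\theta \in \Theta \setminus \{\theta^\circ\}$. Since by hypothesis $\pi_\theta > 0$, Theorem~\ref{theorem:asymptotic_rate} applies and gives
\begin{equation}
\frac{1}{i}\log\frac{\bmu_{k,i}(\theta)}{\bmu_{k,i}(\theta^\circ)} \asc -\sum_{k=1}^K v_k \dkl\bigl(L_k(\cdot|\theta^\circ)\,\|\,L_k(\cdot|\theta)\bigr).
\end{equation}
Next I would argue that the right-hand side is \emph{strictly} negative. By Assumption~\ref{as:global_iden}, there exists some agent $m$ with $\dkl(L_m(\cdot|\theta^\circ)\,\|\,L_m(\cdot|\theta))>0$, and by the Perron--Frobenius property in \eqref{eq:perron_def} every centrality weight $v_m$ is strictly positive, so the weighted sum is bounded away from zero. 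Consequently the log ratio diverges to $-\infty$ almost surely, which gives
\begin{equation}
\frac{\bmu_{k,i}(\theta)}{\bmu_{k,i}(\theta^\circ)} \asc 0 \qquad \text{for each } \theta \neq \theta^\circ.
\end{equation}

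Finally I would convert this statement on ratios into a statement on $\bmu_{k,i}(\theta^\circ)$ itself. Using the fact that $\bmu_{k,i}$ is a pmf, I would write
\begin{equation}
\frac{1}{\bmu_{k,i}(\theta^\circ)} = 1 + \sum_{\theta \neq \theta^\circ}\frac{\bmu_{k,i}(\theta)}{\bmu_{k,i}(\theta^\circ)}.
\end{equation}
Since $\Theta$ is finite, the finite sum of almost-surely vanishing sequences is almost surely vanishing (the intersection of finitely many probability-one events has probability one). Hence $1/\bmu_{k,i}(\theta^\circ) \asc 1$, which gives the desired $\bmu_{k,i}(\theta^\circ) \asc 1$ for every agent $k$.

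There is no real obstacle here; the only thing to be slightly careful about is invoking Theorem~\ref{theorem:asymptotic_rate} once per wrong hypothesis and then taking the intersection of the resulting $H-1$ probability-one events, which is valid because $H$ is finite. All the heavy lifting---handling the randomness from $\btau_i$, the non-linear bootstrapping step \eqref{eq:alg_approximate}, and the coupling across hypotheses induced by the simplex constraint---was already done in establishing Theorem~\ref{theorem:asymptotic_rate}.
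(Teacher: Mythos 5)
Your proposal is correct and follows exactly the route the paper intends: the corollary is stated as an immediate consequence of combining Theorem~\ref{theorem:asymptotic_rate} (applied once per wrong hypothesis, using $\pi_\theta>0$) with Assumption~\ref{as:global_iden} and the positivity of the Perron entries, and the paper's own sketch (see the end of Appendix~\ref{ap:ful_com_learning}) uses the same ratio-to-belief conversion you spell out. No gaps; your bookkeeping of the finitely many probability-one events is the only detail the paper leaves implicit.
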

Notice that any asymmetry between the entries of \( \pi \) does not affect the learning. In particular, more or less frequent communication of a hypothesis does not change the asymptotic rate of convergence. In \cite{bordignon2020partial}, truth learning (in the sense of Definition~\ref{def:learning_mis}) occurs if, and only if, the fixed transmitted hypothesis is the true hypothesis. Corollary \ref{corollary:truth_learning} shows that if agents are bootstrapping as opposed to using uniform weights \cite{bordignon2020partial}, then learning can occur as long as \( \pi_\theta > 0 \) for all wrong hypotheses \( \theta \). This implies that they can learn the truth even if they do not discuss the true hypothesis, i.e. even if \( \pi_{\theta^\circ} = 0 \).

\subsection{Truth Sharing}\label{sec:truth_sharing}

In the previous section, we drew the following two conclusions: In Theorem~\ref{theorem:asymptotic_rate}, we established that if there is a positive probability \( \pi_\theta > 0 \) of transmitting a wrong hypothesis \( \theta \neq \theta^\circ \), this is sufficient to reject $\theta$, i.e., beliefs on that hypothesis will go to zero exponentially fast. Building upon this, in Corollary~\ref{corollary:truth_learning} we showed that exchanging all wrong hypotheses with positive probability (i.e., \( \pi_\theta > 0, \: \forall \theta \neq \theta^\circ \)) is sufficient for agents to infer the truth. Given these findings, the question arises: Is the \emph{exclusive} exchange (\( \pi_{\theta^\circ}= 1 \)) of the true hypothesis alone enough for learning? We give a negative answer to this question by providing a toy counter-example where agents do not learn even when \( \pi_{\theta^\circ}= 1 \).

Consider a fully-connected network of 3 agents (see Fig. \ref{fig:three_agents}). The hypotheses set is \( \Theta = \{1,2,3,4 \}\) where incidentally \( \theta^\circ = 4 \). Assume that agent \( k \) cannot distinguish between the true hypothesis \( \theta^\circ = 4 \) and the hypothesis \( \theta_k = k \), i.e., \(  \dkl (  L_k (\cdot | \theta^\circ)  || L_k (\cdot | \theta_k)  ) = 0\). Assume further that each agent $k$ is capable of distinguishing hypotheses other than $\theta_k$ and $\theta^\circ$ (e.g., agent $1$ can distinguish hypotheses $2$ and $3$, but cannot distinguish $1$ and $4$). Since each wrong hypothesis \( \theta \neq \theta^\circ \) can be distinguished by at least one agent, the problem is globally identifiable, satisfying Assumption~\ref{as:global_iden}. 

\begin{figure}[ht]
\centerline{\includegraphics[width=\linewidth]{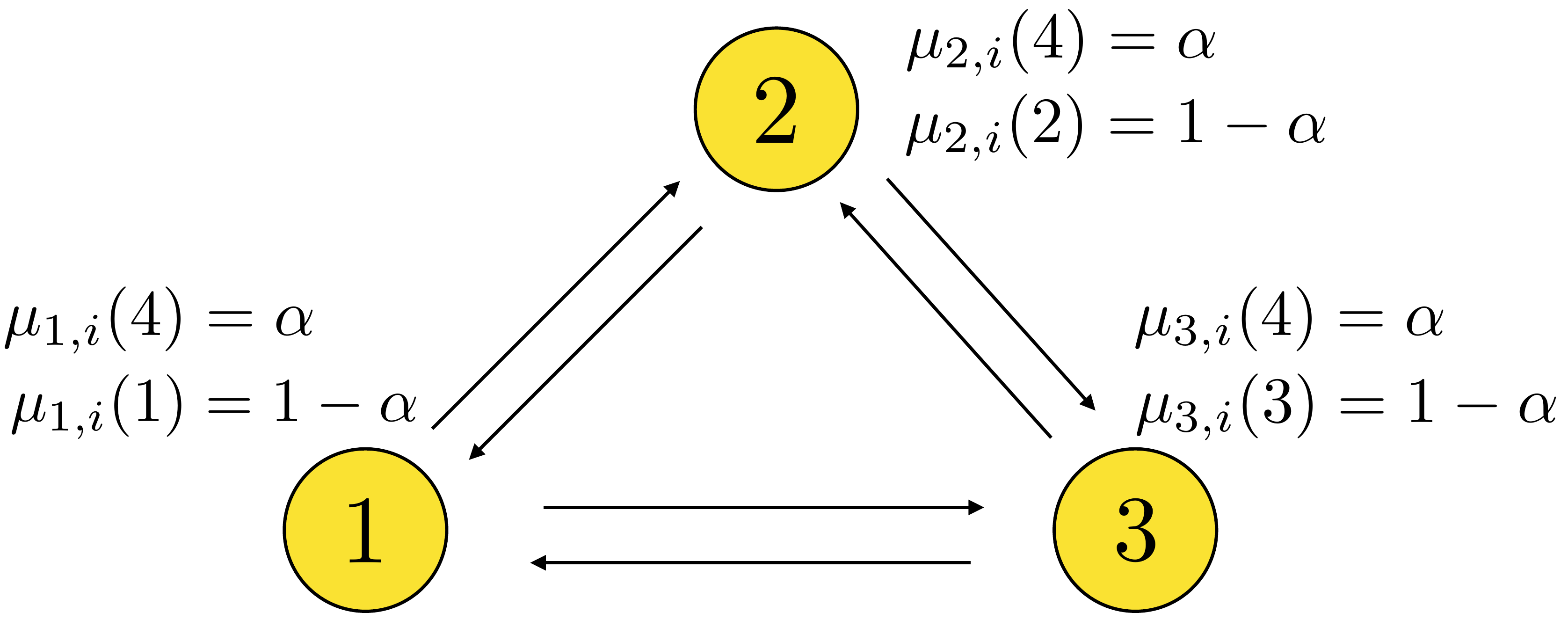}}
\caption{A network of 3 agents. Each agent $k$ can distinguish all but the hypotheses $\theta^\circ = 4$ and $\theta_k = k$. In the assumed scenario, belief values on $\theta^\circ = 4$ are assumed to be $\alpha \in (0,1)$ for all agents. Belief values at locally distinguishable hypotheses are assumed to be 0.}
\label{fig:three_agents}
\end{figure}

Imagine that at time \( i \), \( \bmu_{k,i} (\theta^\circ) = \alpha \) and \( \bmu_{k,i} (\theta_k) = 1-\alpha \) for each agent \( k \). In other words, for each agent $k$, the beliefs on hypotheses that are locally distinguishable are equal to 0 \footnote{Under Assumption~\ref{as:positive_initial_beliefs} and Eq. \eqref{eq:assumption_finite_kl}, the belief values remain nonzero for all finite time instants. For the ease of the presentation, we assume that sufficient time has elapsed so that the beliefs on locally distinguishable hypotheses are sufficiently close to 0.}. Using \eqref{eq:local_bayesian} yields the intermediate beliefs:
\begin{align}
    \bpsi_{k,i+1} (\theta^\circ) =\frac{L_k(\bxi_{k,i+1} | \theta^\circ) \bmu_{k,i} (\theta^\circ)}{L_k(\bxi_{k,i+1} | \theta^\circ) \bmu_{k,i} (\theta^\circ)+ L_k(\bxi_{k,i+1} | \theta_k) \bmu_{k,i} (\theta_k)}.
\end{align}
Since \(  L_k(\bxi_{k,i+1} | \theta^\circ) =  L_k(\bxi_{k,i+1} | \theta_k)\) (their KL divergence is assumed to be 0):
\begin{align}
    \bpsi_{k,i+1} (\theta^\circ) = \frac{\bmu_{k,i} (\theta^\circ)}{\bmu_{k,i} (\theta_k)+\bmu_{k,i} (\theta^\circ)} = \bmu_{k,i} (\theta^\circ) = \alpha
\end{align}
Similarly,
\begin{align}
    \bpsi_{k,i+1} (\theta_k) = \bmu_{k,i} (\theta_k) = 1- \alpha.
\end{align}
Agents fill the received intermediate beliefs by using their own beliefs (i.e, they bootstrap) according to \eqref{eq:alg_approximate}. Here, since \( \pi_{\theta^\circ} = 1\) (i.e., \(\btau_{i+1} = \theta^\circ\)) we get:
\begin{align}
    \widehat{\bpsi}^{(k)}_{\ell,i+1} (\theta^\circ) = \bpsi_{\ell, i+1} (\theta^\circ) = \alpha,
\end{align}
and
\begin{align}
    \widehat{\bpsi}^{(k)}_{\ell,i+1} (\theta_k) = \bpsi_{k, i+1} (\theta_k) =1- \alpha.
\end{align}
Finally, after the combination of the approximate intermediate beliefs \eqref{eq:alg_combine}, we arrive at
\begin{align}
    \bmu_{k,i+1} (\theta^\circ)= \bmu_{k,i} (\theta^\circ) = \alpha,
\end{align}
and
\begin{align}
    \bmu_{k,i+1} (\theta_k)= \bmu_{k,i} (\theta_k) = 1- \alpha.
\end{align}
This is an equilibrium (fixed point) for the algorithm. Regardless of the emitted observations, the beliefs of agents will not change over time. Consequently, if \( \alpha \) is small, agents can get stuck in beliefs where their confidence levels on the wrong hypotheses are higher than their confidence on the true hypothesis.

The results in Sec.~\ref{sec:truth_learning} and the counter-example in this section reveal that it is the exchange of the wrong hypotheses that promotes truth learning, and not truth sharing. The intuition behind this rather surprising outcome is the following. In the current strategy, agents fill the missing belief components with their own beliefs (using bootstrapping). If their beliefs happen to align closely on the true hypothesis, it is difficult to change them. This undesirable equilibrium is bypassed when agents exchange beliefs on wrong hypotheses. This is due to Assumption~\ref{as:global_iden}, which states that there exists at least one agent that is able to drive the beliefs on a wrong hypothesis to zero.

In \cite{bordignon2020partial} (see Eq. \eqref{eq:partial_information}), when the fixed transmitted hypothesis corresponds to the truth, truth learning in the sense of Definition~\ref{def:learning_mis} occurs almost surely. The example described in this section suggests that using one's own belief for estimating non-transmitted components of neighbors, i.e., bootstrapping, as opposed to using uniform priors, may lead regular agents to become \emph{conservative} about their own opinions. It can prevent learning under partial information sharing. In addition to not learning the truth, the network can also fail to reach consensus and opinion clusters might emerge. In Fig. \ref{fig:three_agents}, agents having positive beliefs for different hypotheses can have a major effect especially when \( \alpha \) is small. It leads to a strong network disagreement. Network disagreement phenomena were observed in the works \cite{acemoglu13, yildiz13,lena2019,vial2021local} when there are special agents who never change their opinions, i.e., stubborn agents. Our result, on the other hand, indicates that even when the network is only composed of regular agents, limited communication can hinder concurrence and truth learning.

\subsection{Impossibility of Mislearning}\label{sec:impossible_mislearning}

The previous section demonstrated that bootstrapping might induce network disagreement and poor equilibrium. In this section, we provide a positive result in the opposite direction: Agents will never be fully confident on a wrong hypothesis. Total mislearning cannot occur.

\begin{theorem}[\textbf{Impossibility of mislearning}]\label{th:impossible_mislearning}
Under Assumption \ref{as:positive_initial_beliefs}, agents will always have positive confidence on the true hypothesis. Namely, for each agent \( k \), $\bmu_{k,i}(\theta^\circ) > 0$ for any finite time $i$, and
\begin{align}
    \mathbb{P} \Big ( \liminf_{i \to \infty} \bmu_{k,i}(\theta^\circ) = 0 \Big ) = 0,
\end{align}
or alternatively, for \( \theta \in \Theta \setminus \{ \theta^\circ \}\)
\begin{align}
     \mathbb{P} \Big ( \limsup_{i \to \infty} \bmu_{k,i}(\theta) = 1 \Big ) = 0.
\end{align}
\end{theorem}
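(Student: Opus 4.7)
The plan is to split the statement into two parts: the finite-time strict positivity $\bmu_{k,i}(\theta^\circ) > 0$, which I would handle by a direct induction on $i$, and the asymptotic claim, which I would derive by invoking Doob's super-martingale convergence theorem on the network loss $Q(\bmu_i) = -\sum_k v_k \log \bmu_{k,i}(\theta^\circ)$ whose super-martingale property has already been established in Lemma~\ref{lemma:submartingale}.

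For the finite-time step, the base case at $i=0$ is exactly Assumption~\ref{as:positive_initial_beliefs}. For the inductive step, since $\bxi_{k,i}$ is drawn from $L_k(\cdot \mid \theta^\circ)$, we have $L_k(\bxi_{k,i} \mid \theta^\circ) > 0$ almost surely, so the adaptation step~\eqref{eq:alg_adapt} preserves strict positivity of the belief at $\theta^\circ$. The bootstrapping step~\eqref{eq:alg_approximate} then uses either $\bpsi_{\ell,i}(\theta^\circ)$ or $\bpsi_{k,i}(\theta^\circ)$ in the numerator --- both strictly positive by the inductive hypothesis --- divided by a finite positive normalization. The combination step~\eqref{eq:alg_combine} is a normalized geometric average of positive quantities, hence positive. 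In particular, this guarantees $Q(\bmu_i) < \infty$ almost surely at every finite time, which is the integrability condition needed to apply Lemma~\ref{lemma:submartingale} cleanly.

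For the asymptotic claim, I would exploit that $\bmu_{k,i}(\theta^\circ) \leq 1$, so $Q(\bmu_i) \geq 0$ and $\{Q(\bmu_i)\}$ is a \emph{nonnegative} super-martingale with respect to $\bmf_i$. Doob's convergence theorem then yields almost sure convergence of $Q(\bmu_i)$ to some a.s.\ finite random variable $Q_\infty$. Since each summand $-v_k \log \bmu_{k,i}(\theta^\circ) \geq 0$ is dominated by $Q(\bmu_i)$, one obtains
\begin{equation*}
\limsup_{i \to \infty} \bigl( -\log \bmu_{k,i}(\theta^\circ) \bigr) \;\leq\; Q_\infty / v_k \;<\; \infty \quad \text{a.s.}
\end{equation*}
Exponentiating gives $\liminf_i \bmu_{k,i}(\theta^\circ) \geq \exp(-Q_\infty/v_k) > 0$ almost surely, which is the first formulation. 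The second formulation follows immediately, since if $\limsup_i \bmu_{k,i}(\theta) = 1$ along a subsequence for some $\theta \neq \theta^\circ$, the pmf constraint $\sum_{\theta'} \bmu_{k,i}(\theta') = 1$ would force $\bmu_{k,i}(\theta^\circ) \to 0$ on that subsequence, contradicting the strictly positive $\liminf$.

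The main obstacle is conceptual rather than computational: one must verify that Lemma~\ref{lemma:submartingale} applies on a well-defined filtration and that the Doob convergence hypothesis is met, both of which are supplied by the induction step. Once the super-martingale is seen to be nonnegative, the remainder is essentially a one-line invocation of Doob followed by exponentiation; no new bound on the transmission probabilities $\pi$ is required, which is consistent with the fact that mislearning cannot occur regardless of which hypotheses are exchanged.
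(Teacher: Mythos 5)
Your proposal is correct and follows essentially the same route as the paper: the nonnegative super-martingale $Q(\bmu_i)$ from Lemma~\ref{lemma:submartingale}, Doob's convergence theorem to obtain a finite limit $\bm{Q}_\infty$, and then the conclusion that $\liminf_i \bmu_{k,i}(\theta^\circ) > 0$ almost surely (the paper passes directly from finiteness of the weighted log-belief sum to the positive $\liminf$, while you make the bound $\exp(-\bm{Q}_\infty/v_k)$ explicit). Your added induction for finite-time positivity and the pmf argument for the second formulation are details the paper handles only implicitly (in a footnote and in Appendix~\ref{ap:auxiliary_res}), but they do not change the substance of the argument.
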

\begin{proof}
See Appendix~\ref{ap:impossible_mislearning}.
\end{proof}

Notice that there is no assumption on the transmission probabilities in Theorem \ref{th:impossible_mislearning}. With bootstrapping, agents never learn a wrong hypothesis. In \cite{bordignon2020partial}, it was shown that agents might mislearn a wrong hypothesis if the fixed transmitted hypothesis is not the true hypothesis --- recall Proposition~\ref{prop:part_information_mislearning}. As a matter of fact, bootstrapping leads to a more robust design in the face of partial communication.

\begin{figure*}[ht]
\centerline{\includegraphics[width=\linewidth]{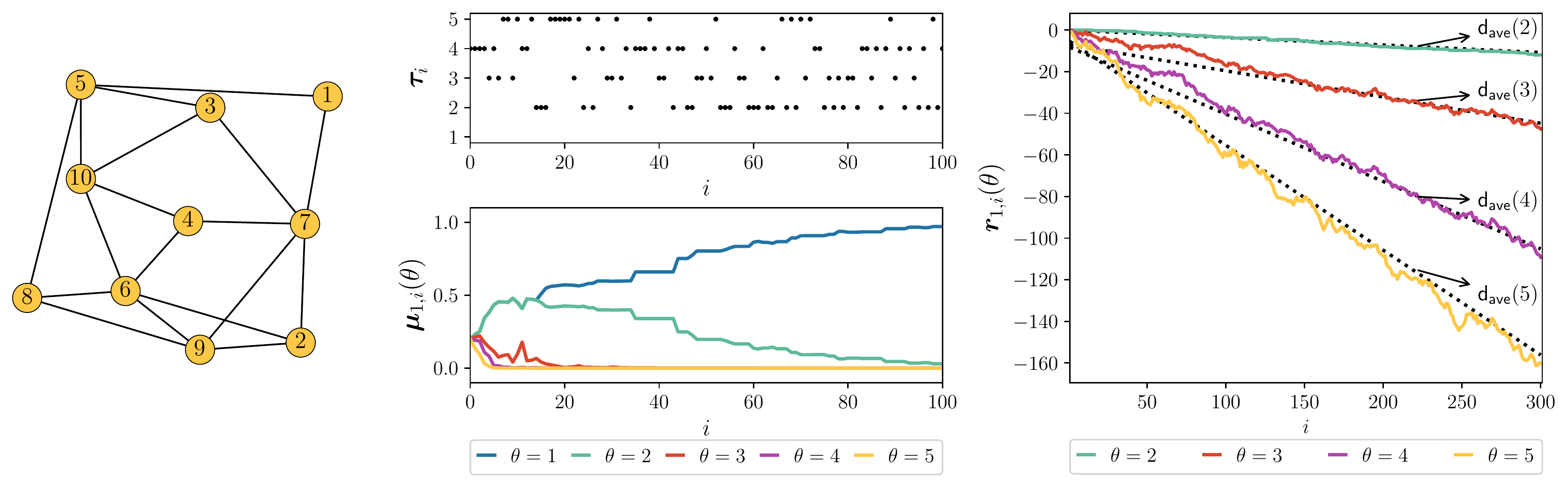}}
\caption{{\em Leftmost panel:} Network topology. {\em Middle panel}: Evolution of the shared hypothesis $\bm{\tau}_i$ over time in the upper panel, and belief evolution for agent 1 showing \emph{truth learning} in the bottom panel. {\em Rightmost panel:} Experimental rates of convergence for agent 1, i.e., $\bm{r}_{1,i}(\theta)$, defined in \eqref{eq:r1i_def}, for different hypotheses (in colored lines), compared with the theoretical asymptotic rate of convergence, i.e., ${\sf d}_{\sf ave}$ defined in \eqref{eq:dave_definition} (in black dotted lines).}
\label{fig:1}
\end{figure*}

\section{Numerical Simulations}\label{sec:simulations}
Consider a $10-$agent strongly-connected network (see the topology in the leftmost panel of Fig.~\ref{fig:1}). The combination matrix is designed using the Metropolis rule~\cite{sayed_2014}, yielding a doubly-stochastic matrix. Agents are trying to detect the true state $\theta^\circ$ among a set of five hypotheses, namely $\Theta\triangleq \{1,2,3,4,5\}$. Incidentally, we assume that $\theta^\circ=1$. To accomplish this task, agents use the protocol described in \eqref{eq:alg_adapt}-\eqref{eq:alg_combine}, where the random shared hypothesis $\bm{\tau}_i$ is distributed according to the following probability mass function:
\begin{equation}
    \mathbb{P}(\boldsymbol{\tau}_i = \theta)=\pi_{\theta}=
    \begin{cases}
    0,&\text{ if }\theta=\theta^\circ\\
    0.25,&\text{ otherwise}.
    \end{cases}
\end{equation}
Agents consider a family of unit-variance Gaussian densities:
\begin{equation}
    f_n(x) = \frac{1}{\sqrt{2\pi}}\exp\Big\{-\frac{(x-0.3n)^2}{2}\Big\}
\end{equation}
for $n=1,2,3,4,5$. The likelihoods of agents are chosen among these Gaussian densities according to the identifiability setup in Table~\ref{tab:2}. For example, we note that agents \(8\)--\(10\) cannot distinguish hypotheses $1$ and $5$. Observe that the global identifiability condition in Assumption \ref{as:global_iden} is satisfied.
\renewcommand{\arraystretch}{1.2}
\begin{table}[ht]
\caption{Identifiability Setup for Network in Fig.~\ref{fig:1}}
\begin{center}
\begin{tabular}{|c|c|c|c|c|c|}
\hline
\multirow{2}{*}{\textbf{Agent} $k$}&\multicolumn{5}{|c|}{\textbf{Likelihood Function:} $L_k(x|\theta)$} \\
\cline{2-6} 
& $\theta=1$& $\theta=2$& $\theta=3$& $\theta=4$& $\theta=5$\\
\hline
$1-2$& $f_1$ &$f_1$ &$f_3$ &$f_4$ &$f_5$\\
$3-5$& $f_1$ &$f_2$ &$f_1$ &$f_4$ &$f_5$ \\
$6-7$& $f_1$ &$f_2$ &$f_3$ &$f_1$ &$f_5$ \\
$8-10$&  $f_1$ &$f_2$ &$f_3$ &$f_4$ &$f_1$\\
\hline
\end{tabular}
\label{tab:2}\end{center}
\end{table}

In the middle panel of Fig.~\ref{fig:1}, we see the evolution of the belief for agent 1, which shows that, although the agents never share information about the true hypothesis, i.e., $\pi_{\theta^\circ}=0$, the agent asymptotically learns the truth, as suggested by Corollary \ref{corollary:truth_learning}. A similar behavior happens for the remaining agents. 

The rightmost panel of Fig.~\ref{fig:1} shows that the experimental convergence rates for agent $1$, i.e.,
\begin{align}\label{eq:r1i_def}
   \bm{r}_{1,i}(\theta) \triangleq   \log \frac{ \bmu_{1,i}(\theta)}{ \bmu_{1,i}(\theta^\circ)} 
\end{align}
which are shown in colored lines, approach the asymptotic convergence rates of traditional social learning (black dotted lines):
\begin{align}\label{eq:dave_definition}
    {\sf d}_{\sf ave} (\theta) \triangleq  \sum_{k = 1}^K  -v_k \dkl\Big (  L_k (\cdot | \theta^\circ) \big |\big| L_k (\cdot | \theta) \Big )
\end{align}
as predicted by Theorem \ref{theorem:asymptotic_rate}. This means that regarding the asymptotic convergence rate, there is no performance loss when only one hypothesis is exchanged at each iteration as long as all wrong hypotheses have positive probability of being transmitted.

\begin{figure}[ht]
\centerline{\includegraphics[width=\linewidth]{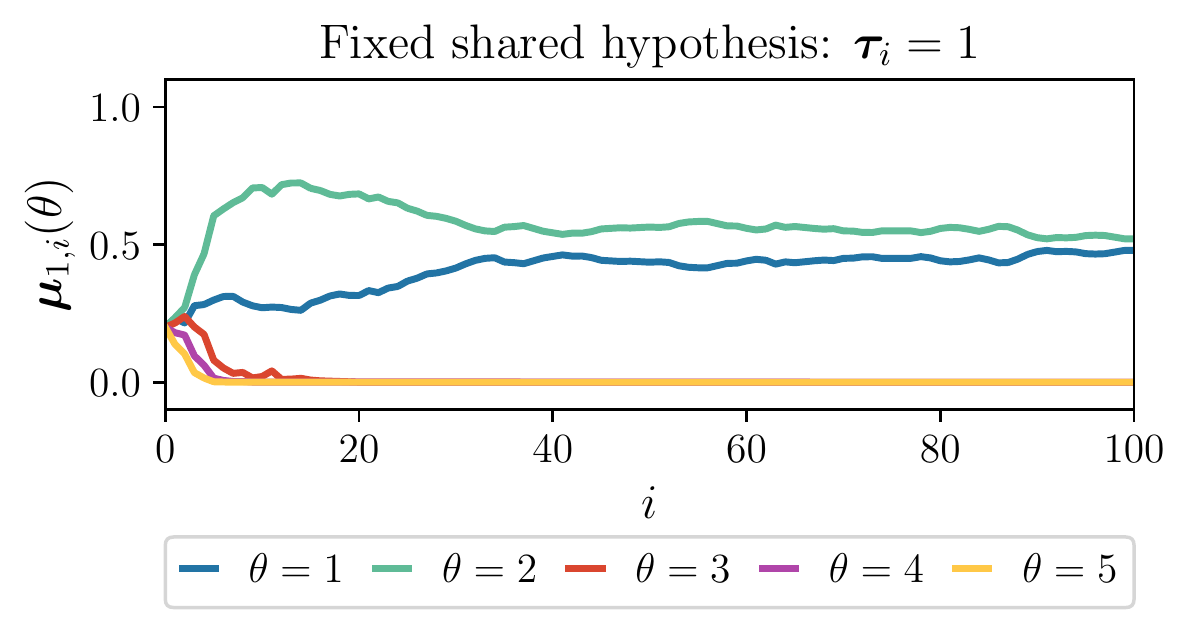}}
\caption{Belief evolution of agent $1$ when the shared hypothesis is \emph{fixed over time} to be the true state of nature. This demonstrates that while there is no truth learning with full confidence, there is no complete rejection of the true hypothesis as well.}
\label{fig:2}
\end{figure}

In the next simulation, we illustrate that truth sharing is not sufficient for truth learning. For that purpose, we fix the transmitted hypothesis at the true hypothesis $\bm{\tau}_i=\theta^\circ=1$ for all $i=1,2,\dots$. The result can be seen in Fig.~\ref{fig:2}, where we show the evolution of the belief of agent $1$ over time. Despite sharing the true hypothesis, the \emph{conservative} behavior described in Section \ref{sec:truth_sharing} hinders the ability of the agent to learn the truth. We note that agent $1$ cannot \emph{decidedly} distinguish between hypothesis $1$ and $2$, which are indistinguishable from its local point of view (see Table~\ref{tab:2}). This was suggested by the example in Section \ref{sec:truth_sharing}, where agents are caught in an equilibrium where they have non-zero belief values for locally indistinguishable hypotheses. Notice that although truth learning is not observed, there is no total mislearning phenomena as well. As suggested by Theorem \ref{th:impossible_mislearning}, the confidence in the truth is not going to 0.

\section{Conclusion}

In this work, we studied social learning under partial information sharing where the transmitted hypothesis is changing at every iteration. In Sec.~\ref{sec:algorithm_description}, we proposed an algorithm for this setting in which agents fill the latent belief components with their own beliefs. In Sec.~\ref{sec:truth_learning}, we derived the rate of convergence under the proposed algorithm and provided sufficient conditions for truth learning. Then, in Secs.~\ref{sec:truth_sharing} and \ref{sec:impossible_mislearning}, we demonstrated that by exchanging beliefs exclusively on the true hypotheses, agents will neither learn the truth with full confidence nor mislearn, i.e., learn a false hypothesis with full confidence. Instead they will be unsure about the truth among their indistinguishable hypotheses. 

There are many possible extensions to the setting considered in this work. For instance, each agent $k$ can choose a possibly different hypothesis $\btau_{k,i}$ to transmit at each iteration.  Alternatively, the transmitted hypothesis $\btau_i$ can evolve according to some Markovian model, instead of independently over time. These extensions introduce additional complexities to the technical analysis, especially when managing the random matrix products. Another interesting extension would be to see if the current work on finite hypothesis sets can extend to continuous hypothesis sets, such as compact sets as in \cite{uribe2022nonasymptotic} or non-compact sets as in the distributed estimation literature (see e.g., \cite{sayed_proc2014,kar2012distributed,dedecius_2017}).

\appendices

\allowdisplaybreaks

\section{Proof Sketch of Proposition~\ref{prop:full_com_learning}}\label{ap:ful_com_learning}
The full proof appears in \cite{nedic_2017,lalitha_2018,kayaalp2022_aa_ga}. The algorithm \eqref{eq:local_bayesian}--\eqref{eq:trad_soc_combine} gives rise to the following recursion for the log-belief ratios:
\begin{equation}\label{eq:log_ratio_trad}
 \log\frac{\bmu_{k,i}(\theta)}{\bmu_{k,i}(\theta^\circ)} = \sum_{\ell \in \mathcal{N}_k} a_{\ell k} \Big ( \log \frac{L_\ell(\bxi_{\ell,i} | \theta)}{L_\ell(\bxi_{\ell,i} | \theta^\circ)} + \log\frac{\bmu_{\ell,i-1}(\theta)}{\bmu_{\ell,i-1}(\theta^\circ)} \Big ).
\end{equation}
Iterating over $i$ and dividing by $i$ gives
\begin{align}\label{eq:ga_i_fold}
 \frac 1 i \log\frac{\bmu_{k,i}(\theta)}{\bmu_{k,i}(\theta^\circ)} &= \frac 1 i  \sum_{j=1}^{i} \sum_{\ell =1}^K [A^{i-j+1}]_{\ell k} \log \frac{L_\ell(\bxi_{\ell,j} | \theta)}{L_\ell(\bxi_{\ell,j} | \theta^\circ)} \notag \\& \qquad + \frac 1 i \sum_{\ell =1}^K [A^{i}]_{\ell k} \log\frac{\bmu_{\ell,0}(\theta)}{\bmu_{\ell,0}(\theta^\circ)} .
\end{align}
Now, since $A$ is primitive and left-stochastic, it holds that $A^i~\to~v\mathds{1}_{K}^T$ \cite[Chapter 8]{horn2012matrix}, as $i \to \infty$. If we incorporate this fact into \eqref{eq:ga_i_fold}, the second term on the right hand side (RHS) vanishes due to Assumption~\ref{as:positive_initial_beliefs} and the expression transforms into:
\begin{align}\label{eq:before_sloln}
 \frac 1 i \log\frac{\bmu_{k,i}(\theta)}{\bmu_{k,i}(\theta^\circ)} &\to \frac 1 i  \sum_{j=1}^{i} \sum_{\ell =1}^K v_{\ell} \log \frac{L_\ell(\bxi_{\ell,j} | \theta)}{L_\ell(\bxi_{\ell,j} | \theta^\circ)} 
\end{align}
as $i$ gets larger. Remember that the observations $\{\bxi_{\ell,i}\}$ are i.i.d. over time, and the expectation of the log-likelihood ratio satisfies:
\begin{equation}\label{eq:expected_log_likelihood}
\e \Big [\log \frac{L_k(\bxi_{k,i} | \theta^\circ)}{L_k(\bxi_{k,i} | \theta)} \Big ] =  \dkl (L_k(.|\theta^{\circ}) || L_k(.|\theta)) < \infty.
\end{equation}
Using \eqref{eq:expected_log_likelihood} and applying the strong law of large numbers \cite[Chapter 7]{williams_1991} to \eqref{eq:before_sloln}, we establish \eqref{eq:ga_standard_rate}. In addition, if Assumption~\ref{as:global_iden} holds, then the RHS of \eqref{eq:ga_standard_rate} is strictly positive for each $\theta \neq \theta^\circ$. Consequently, $\forall \theta \in \Theta \setminus \{ \theta^\circ \}$, $\bmu_{k,i} (\theta) \asc 0 $, which in turn implies \eqref{eq:ga_standard_truth_learning}.

\section{Connection to Stochastic Mirror Descent}\label{ap:mirror_descent}

The goal of social learning can be formulated as solving:
\begin{equation}\label{eq:social_cost}
\min_{\theta \in \Theta} \sum_{k=1}^K \dkl \big (L_k( \cdot |\theta^\circ) || L_k( \cdot |\theta) \big )
\end{equation}
Under Assumption~\ref{as:global_iden}, the unique solution is the true hypothesis \(\theta^\circ\). However, since the true hypothesis is unknown, agents can use stochastic algorithms based on instantaneous data. In \cite{nedic2016}, it is shown that the traditional full information sharing algorithm \eqref{eq:local_bayesian}--\eqref{eq:trad_soc_combine} corresponds to a distributed stochastic mirror descent algorithm for this problem (see \cite{sayed_2023}), with the KL-divergence chosen as the Bregman divergence. Namely, they show that at each iteration, the agents' updates coincide with the distributed stochastic mirror descent algorithm applied to \eqref{eq:social_cost}:
\begin{equation}\label{eq:fusion_rule_objective}
\min_{\mu\in\Delta_H}\!\!\left\{\sum_{\ell\in\mathcal{N}_k}\!\!a_{\ell k}\Big(\dkl(\mu||\bmu_{\ell,i-1})- \,\e_{\mu}\log L_\ell(\bxi_{\ell,i}|\theta)\Big)\!\!\right\}
\end{equation}
where $\Delta_H$ is the probability simplex of dimension $H$, and $\e_{\mu}$ is the expectation computed with respect to $\mu$, i.e.,
\begin{align}
    \e_{\mu} \log L_\ell(\bxi_{\ell,i}|\theta) \triangleq \sum_{\theta \in \Theta} \mu (\theta) \log L_\ell(\bxi_{\ell,i}|\theta).
\end{align}
The objective function in \eqref{eq:fusion_rule_objective} consists of two terms. The first term is the weighted average of KL-divergences across neighboring agents; this average penalizes the disagreement among the neighbors' prior beliefs \( \{ \bmu_{\ell,i-1} \} \). The second term in \eqref{eq:fusion_rule_objective} corresponds to the likelihood of the observations received within the neighborhood, averaged over the hypotheses with respect to \( \mu \). The cost in \eqref{eq:fusion_rule_objective} then seeks to minimize disagreement while maximizing the likelihood.

However, in our problem setting, agents get only the \(\btau_i\) component of their neighbors' beliefs at \(i\). Therefore, problem \eqref{eq:fusion_rule_objective} can be modified to use only that component and fill the rest with the information from agent $k$. That is to say, the KL-divergence is modified as:
\begin{align}
\dkl&(\mu||\bmu_{\ell,i-1}) \Longrightarrow \notag \\
  &\sum_{\theta \neq \btau_i}  \mu (\theta) \log \frac{\mu(\theta)}{\bmu_{k,i-1} (\theta)} + \mu (\btau_i) \log \frac{\mu(\btau_i)}{\bmu_{\ell,i-1} (\btau_i)},
  \end{align}
  and the expectation with respect to \(\mu\) is modified to:
  \begin{align}
\e_{\mu}&\log L_\ell(\bxi_{\ell,i}|\theta)  \Longrightarrow \notag \\ &\sum_{\theta \neq \btau_i} \mu (\theta) \log L_k(\bxi_{k,i}|\theta) +  \mu (\btau_i) \log L_\ell(\bxi_{\ell,i}|\btau_i).
\end{align}

\section{Proof of Lemma~\ref{lemma:submartingale}}\label{ap:submartingale}
From the combination step \eqref{eq:alg_combine},
\begin{align}\label{eq:log_bound_1}
&\log \bmu_{k,i}(\theta^\circ) \notag \\ &\!\!\!=  \sum_{\ell \in \mathcal{N}_k } a_{\ell k} \log \widehat{\bpsi}^{(k)}_{\ell,i} (\theta^\circ ) - \log  \sum_{\theta^\prime \in \Theta}\text{exp} \Big \{ \sum_{\ell \in \mathcal{N}_k } a_{\ell k} \log \widehat{\bpsi}^{(k)}_{\ell,i} (\theta^\prime ) \Big \} \notag \\
&\stackrel{(a)}{\geq} \sum_{\ell \in \mathcal{N}_k } a_{\ell k} \log \widehat{\bpsi}^{(k)}_{\ell,i} (\theta^\circ ) - \log  \sum_{\theta^\prime \in \Theta} \sum_{\ell \in \mathcal{N}_k } a_{\ell k}  \widehat{\bpsi}^{(k)}_{\ell,i} (\theta^\prime ) \notag \\
&= \sum_{\ell \in \mathcal{N}_k } a_{\ell k} \log \widehat{\bpsi}^{(k)}_{\ell,i} (\theta^\circ ) - \log  \sum_{\ell \in \mathcal{N}_k } a_{\ell k}   \sum_{\theta^\prime \in \Theta} \widehat{\bpsi}^{(k)}_{\ell,i} (\theta^\prime ) \notag \\
&\stackrel{(b)}{=} \sum_{\ell \in \mathcal{N}_k } a_{\ell k} \log \widehat{\bpsi}^{(k)}_{\ell,i} (\theta^\circ )
\end{align}
where \( (a) \) follows from Jensen's inequality, and \( (b) \) follows from the fact that $\widehat{\bpsi}^{(k)}_{\ell,i}$ is a pmf and $A$ is left-stochastic. Therefore, the conditional expectations satisfy
\begin{align}\label{eq:log_bound_2}
    &\e \Big [\log \bmu_{k,i} (\theta^\circ)  \Big | \bmf_{i-1} \Big ] \notag \\  &\qquad \geq \e \Big [ \sum_{\ell \in \mathcal{N}_k } a_{\ell k} \log \widehat{\bpsi}^{(k)}_{\ell,i} (\theta^\circ ) \Big | \bmf_{i-1} \Big ] \notag \\
     & \qquad \stackrel{(a)}{=} \e_{\xi_i,\tau_i}  \Big [ \sum_{\ell \in \mathcal{N}_k } a_{\ell k} \log \widehat{\bpsi}^{(k)}_{\ell,i} (\theta^\circ ) \Big ] \notag \\
     & \qquad \stackrel{(b)}{=} \e_{\xi_i} \e_{\tau_i | \xi_i}  \Big [ \sum_{\ell \in \mathcal{N}_k } a_{\ell k} \log \widehat{\bpsi}^{(k)}_{\ell,i} (\theta^\circ ) \Big ] \notag \\
     & \qquad \stackrel{(c)}{=} \e_{\xi_i} \e_{\tau_i}  \Big [ \sum_{\ell \in \mathcal{N}_k } a_{\ell k} \log \widehat{\bpsi}^{(k)}_{\ell,i} (\theta^\circ ) \Big ] 
     \end{align}
where \( (a) \) follows from the fact that arguments inside the expectation are functions of \(\{\bxi_i,\btau_i,\bmf_{i-1}\}\), \( (b) \) follows from the tower rule of expectation, and \( (c) \) follows from the fact that \( \bxi_i\) and \( \btau_i\) are assumed to be independent. The inner expectation can be written as
    \begin{align}\label{eq:log_bound_3}
    &\e_{\tau_i}  \Big [ \sum_{\ell \in \mathcal{N}_k } a_{\ell k} \log \widehat{\bpsi}^{(k)}_{\ell,i} (\theta^\circ )  \Big ] \notag \\ &=\pi_{\theta^\circ}  \sum_{\ell \in \mathcal{N}_k }  a_{\ell k} \log \frac{\bpsi_{\ell,i}(\theta^\circ)}{1-\bpsi_{k,i}(\theta^\circ)+\bpsi_{\ell,i}(\theta^\circ) } \notag \\
    & \qquad + \sum_{\tau \neq \theta^\circ} \pi_{\tau}  \sum_{\ell \in \mathcal{N}_k }  a_{\ell k} \log \frac{\bpsi_{k,i}(\theta^\circ)}{1-\bpsi_{k,i}(\tau)+\bpsi_{\ell,i}(\tau) } \notag \\
    &= \pi_{\theta^\circ}  \sum_{\ell \in \mathcal{N}_k }  a_{\ell k} \log \bpsi_{\ell,i}(\theta^\circ) +  \sum_{\tau \neq \theta^\circ} \pi_{\tau} \log \bpsi_{k,i}(\theta^\circ) \notag \\
    & \qquad - \sum_{\tau \in \Theta} \pi_{\tau} \sum_{\ell \in \mathcal{N}_k }  a_{\ell k} \log \Big (1-\bpsi_{k,i}(\tau)+\bpsi_{\ell,i}(\tau) \Big ).
\end{align}
Using the Perron vector defined by \eqref{eq:perron_def}, taking the expectation of \eqref{eq:log_bound_1} with respect to \( \btau_i\), and using \eqref{eq:log_bound_3} imply that:
\begin{align}\label{eq:inner_exp}
   &\e_{\tau_i}  \Big [ \sum_{k=1}^K v_k \log \bmu_{k,i}(\theta^\circ) \Big ] \notag \\ &\geq \pi_{\theta^\circ}  \sum_{k=1}^K  v_{ k} \log \bpsi_{k,i}(\theta^\circ) +  \sum_{\tau \neq \theta^\circ} \pi_{\tau} \sum_{k=1}^K  v_{ k} \log \bpsi_{k,i}(\theta^\circ) \notag \\
    & \qquad - \sum_{\tau \in \Theta} \pi_{\tau} \sum_{k=1}^K  v_{ k} \sum_{\ell \in \mathcal{N}_k }  a_{\ell k} \log \Big (1-\bpsi_{k,i}(\tau)+\bpsi_{\ell,i}(\tau) \Big ) \notag \\
    &=  \sum_{k=1}^K  v_{ k} \log \bpsi_{k,i}(\theta^\circ) \notag \\
    & \qquad - \sum_{\tau \in \Theta} \pi_{\tau} \sum_{k=1}^K  v_{ k} \sum_{\ell \in \mathcal{N}_k }  a_{\ell k} \log \Big (1-\bpsi_{k,i}(\tau)+\bpsi_{\ell,i}(\tau) \Big ) \notag \\
    & \stackrel{(a)}{\geq} \sum_{k=1}^K  v_{ k} \log \bpsi_{k,i}(\theta^\circ) \notag \\
    & \quad - \log \Bigg ( \sum_{\tau \in \Theta} \pi_{\tau} \sum_{k=1}^K  v_{ k} \sum_{\ell \in \mathcal{N}_k }  a_{\ell k} \big ( 1-\bpsi_{k,i}(\tau)+\bpsi_{\ell,i}(\tau) \big ) \! \Bigg ) \notag \\
    & = \sum_{k=1}^K  v_{ k} \log \bpsi_{k,i}(\theta^\circ) \notag \\
    & \qquad - \log \Bigg ( \sum_{\tau \in \Theta} \pi_{\tau} \sum_{k=1}^K  v_{ k} \big ( 1-\bpsi_{k,i}(\tau)+\bpsi_{k,i}(\tau) \big ) \Bigg ) \notag \\
    & = \sum_{k=1}^K  v_{ k} \log \bpsi_{k,i}(\theta^\circ) 
    - \log  ( 1 ) \notag \\
    & = \sum_{k=1}^K  v_{ k} \log \bpsi_{k,i}(\theta^\circ) ,
\end{align}
where \( (a) \) follows from Jensen's inequality. Applying the expectation with respect to \(\bxi_i\) to the both sides of \eqref{eq:inner_exp}, we arrive at
\begin{align}
    \e_{\xi_i} \e_{\tau_i}&  \Big [ \sum_{k=1}^K v_k \log \bmu_{k,i}(\theta^\circ) \Big ]\notag \\ &\stackrel{\eqref{eq:inner_exp}}{\geq}  \e_{\xi_i} \Big [ \sum_{k=1}^K  v_{ k} \log \bpsi_{k,i}(\theta^\circ) \Big ] \notag \\
    &\stackrel{(a)}{=} \e_{\xi_i} \Big [ \sum_{k=1}^K  v_{ k}  \log \frac{L_k (\bxi_{k,i} | \theta^\circ) }{\sum_{\theta^\prime \in \Theta}L_k (\bxi_{k,i} | \theta^\prime ) \bmu_{k,i-1}(\theta^\prime)}   \Big ] \notag \\ &\qquad+ \sum_{k=1}^K  v_{ k} \log \bmu_{k,i-1}(\theta^\circ) \notag \\
    &=  \sum_{k=1}^K  v_{ k}  \dkl\Big (  L_k (\cdot | \theta^\circ) \Big |\Big| {\sum_{\theta^\prime \in \Theta}L_k (\cdot | \theta^\prime ) \bmu_{k,i-1}(\theta^\prime)} \Big )  \notag \\ & \qquad+ \sum_{k=1}^K  v_{ k} \log \bmu_{k,i-1}(\theta^\circ) \notag \\
    &\stackrel{(b)}{\geq} \sum_{k=1}^K  v_{ k} \log \bmu_{k,i-1}(\theta^\circ) \label{eq:martingale_data}
\end{align}
where \( (a) \) follows from the fact that belief vectors at time $i-1$ are independent of the new observations at time $i$, and \( (b) \) follows from the fact that KL-divergences are non-negative. It is worth noting that this proof holds even in cases where the transmission distribution is time-dependent, or when it depends on the observations. Therefore, the result is more general than what is explicitly stated. However, to maintain consistency with the other parts of the paper, we used the fixed distribution notation \(\pi\).

\section{Proof of Theorem~\ref{theorem:asymptotic_rate}}\label{ap:rate_convergence}
By the proposed strategy \eqref{eq:alg_adapt}--\eqref{eq:alg_combine}, the log-belief ratio can be written as
\begin{align}\label{eq:logbelief_equation}
    \log \frac{ \bmu_{k,i}(\theta)}{ \bmu_{k,i}(\theta^\circ)}  &= \sum_{\ell \in \mathcal{N}_k } a_{\ell k} \log \frac{\widehat{\bpsi}^{(k)}_{\ell,i} (\theta )}{\widehat{\bpsi}^{(k)}_{\ell,i} (\theta^\circ)}  \notag \\ &= \mathbb{I} \{ \btau_i = \theta \} \sum_{\ell \in \mathcal{N}_k } a_{\ell k}  \log \frac{\bpsi_{\ell,i}(\theta)}{\bpsi_{k,i}(\theta^\circ)} \notag \\&\quad + \mathbb{I} \{ \btau_i = \theta^\circ \} \sum_{\ell \in \mathcal{N}_k } a_{\ell k}  \log \frac{\bpsi_{k,i}(\theta)}{\bpsi_{\ell,i}(\theta^\circ)} \notag \\ & \quad + (1-\mathbb{I} \{ \btau_i = \theta \}- \mathbb{I} \{ \btau_i = \theta^\circ \}) \log \frac{\bpsi_{k,i}(\theta)}{\bpsi_{k,i}(\theta^\circ)}.
\end{align}
Observe from \eqref{eq:logbelief_equation} that the log-belief ratio is a random variable given the intermediate beliefs, because of the randomness of the trending topic \( \btau_i\). Next, we fix a wrong hypothesis \( \theta \neq \theta^\circ \) and define the effective combination matrix for \( \theta \) at time \( i \) as
\begin{align}\label{eq:ai_binary_rv}
    \widetilde{\boldsymbol{A}}_i  \triangleq \begin{dcases} 
     A , & \theta = \btau_i \\
     I , & \theta \neq \btau_i
   \end{dcases}.
\end{align}
This is a binary random variable taking the value of the original combination matrix \( A \) if the hypothesis is exchanged at iteration \( i \), and the identity matrix \( I \) otherwise. More compactly, 
\begin{align}
    \widetilde{\boldsymbol{A}}_i = \mathbb{I} \{ \btau_i = \theta \} A + (1-\mathbb{I} \{ \btau_i = \theta \}) I.
\end{align}
Using this definition, the recursion in \eqref{eq:logbelief_equation} can be rewritten for $\theta\neq\theta^\circ$ as:
\begin{align}\label{eq:one_step_recursion}
    &\log \frac{ \bmu_{k,i}(\theta)}{ \bmu_{k,i}(\theta^\circ)} \notag \\ &= \sum_{\ell \in \mathcal{N}_k }   [\widetilde{\boldsymbol{A}}_i  ]_{\ell k} \log \frac{\bpsi_{\ell,i}(\theta)}{\bpsi_{\ell,i}(\theta^\circ)} \notag \\ &\qquad + (\mathbb{I} \{ \btau_i = \theta^\circ \}-\mathbb{I} \{ \btau_i = \theta \})\sum_{\ell \in \mathcal{N}_k } a_{\ell k} \log \frac{\bpsi_{k,i}(\theta^\circ)}{\bpsi_{\ell,i}(\theta^\circ)} \notag \\
    &\stackrel{\eqref{eq:alg_adapt}}{=}\sum_{\ell \in \mathcal{N}_k }   [\widetilde{\boldsymbol{A}}_i  ]_{\ell k} \log \frac{L_\ell (\bxi_{\ell,i} | \theta )}{L_\ell (\bxi_{\ell,i} | \theta^\circ)} + \sum_{\ell \in \mathcal{N}_k }   [\widetilde{\boldsymbol{A}}_i  ]_{\ell k} \log \frac{\bmu_{\ell,i-1}(\theta)}{\bmu_{\ell,i-1}(\theta^\circ)} \notag \\ & \quad  + (\mathbb{I} \{ \btau_i = \theta^\circ \}-\mathbb{I} \{ \btau_i = \theta \})\sum_{\ell \in \mathcal{N}_k } a_{\ell k} \log \frac{\bpsi_{k,i}(\theta^\circ)}{\bpsi_{\ell,i}(\theta^\circ)} .
\end{align}
The first two terms in the RHS of \eqref{eq:one_step_recursion} are analogous to the terms that arise in the standard log-linear social learning analysis (see Eq.~\eqref{eq:log_ratio_trad}), albeit with the random matrix \( \widetilde{\boldsymbol{A}}_i  \) in place of the original combination matrix \( A \). The last term in \eqref{eq:one_step_recursion} is a residue term due to the network disagreement. In order to expand the recursion over time, we introduce the following notation for the product of the effective combination matrices for \( j \leq i \):\footnote{If $j> i$, we set $\widetilde{\boldsymbol{A}}^{j \shortrightarrow i} = I$.}
\begin{align}
    \widetilde{\boldsymbol{A}}^{j \shortrightarrow i} \triangleq \widetilde{\boldsymbol{A}}_j\widetilde{\boldsymbol{A}}_{j+1}\dots \widetilde{\boldsymbol{A}}_{i-1}\widetilde{\boldsymbol{A}}_i,
\end{align}
and also for the residue terms for \( j~\leq~i \) as
\begin{align}\label{eq:expanded_rjk}
    \boldsymbol{R}_j^k \triangleq  &(\mathbb{I} \{ \btau_j = \theta^\circ \}-\mathbb{I} \{ \btau_j = \theta \}) \notag \\& \qquad \times \sum_{\ell = 1}^K  [\widetilde{\boldsymbol{A}}^{j+1 \shortrightarrow i}]_{\ell k} \sum_{m=1}^K a_{m\ell} \log \frac{\bpsi_{\ell,j}(\theta^\circ)}{\bpsi_{m,j}(\theta^\circ)}.
\end{align}
In other words, $\boldsymbol{R}_j^k$ denotes the residual term at time $i$ caused by the network disagreement at time $j$. Expanding \eqref{eq:one_step_recursion} with these definitions and dividing both sides by $i$, we arrive at the following expression for the convergence rate.
\begin{align}\label{eq:expanded_logbelief}
    \frac 1 i \log \frac{ \bmu_{k,i}(\theta)}{ \bmu_{k,i}(\theta^\circ)} &= \frac 1 i \sum_{j=1}^i \sum_{\ell = 1}^K  [\widetilde{\boldsymbol{A}}^{j \shortrightarrow i}]_{\ell k} \log \frac{L_\ell (\bxi_{\ell,j} | \theta )}{L_\ell (\bxi_{\ell,j} | \theta^\circ)} \notag \\
    & \quad  + \frac 1 i \sum_{\ell = 1}^K  [\widetilde{\boldsymbol{A}}^{1 \shortrightarrow i}]_{\ell k} \log \frac{\bmu_{\ell,0}(\theta)}{\bmu_{\ell,0}(\theta^\circ)} + \frac 1 i \sum_{j=1}^i \boldsymbol{R}_j^k.
\end{align}
In Lemma~\ref{lemma:residual}, we show that the summation of the residue terms in \eqref{eq:expanded_logbelief} stays finite with probability one as $i$ grows, i.e.,
\begin{align}
    \frac{1}{i} \sum_{j=1}^i \boldsymbol{R}_j^k \asc 0.
\end{align}
Therefore, the residue terms do not affect the convergence rate in \eqref{eq:expanded_logbelief} as $i \to \infty$. Consequently, we proceed to study the remaining terms in \eqref{eq:expanded_logbelief}. 
Observe that the finiteness of the KL-divergence of likelihood functions (see Eq.~\eqref{eq:assumption_finite_kl}) can be expressed as
\begin{align}
    \Big | \mathbb{E}\Big[\log \frac{L_k (\bxi_{k,i} | \theta )}{L_k (\bxi_{k,i} | \theta^\circ)} \Big ] \Big | < \infty 
\end{align}
which in turn implies:
\begin{align}\label{eq:likelihood_as}
    \Big | \log \frac{L_k (\bxi_{k,i} | \theta )}{L_k (\bxi_{k,i} | \theta^\circ)} \Big | \stackrel{\text{a.s.}}{<} \infty.
\end{align}
The set of events such that
\begin{align}
 \lim_{n \to \infty } [\widetilde{\boldsymbol{A}}^{j \shortrightarrow j+n}-v\mathds{1}_K^\top]_{\ell k} \log \frac{L_\ell (\bxi_{\ell,j} | \theta )}{L_\ell (\bxi_{\ell,j} | \theta^\circ)} \neq 0
\end{align}
 is a subset of the union of the event sets:
\begin{align}
    \lim_{n \to \infty } [\widetilde{\boldsymbol{A}}^{j \shortrightarrow j+n}-v\mathds{1}_K^\top]_{\ell k} \neq 0
\end{align}
and
\begin{align}
   \Big | \log \frac{L_\ell (\bxi_{\ell,j} | \theta )}{L_\ell (\bxi_{\ell,j} | \theta^\circ)} \Big | = \infty .
\end{align}
But we know that these two sets are null sets because of the auxiliary Lemma~\ref{lemma:matrix_convergence} and \eqref{eq:likelihood_as}, respectively. Therefore, for any time instant $j$, it holds that
\begin{align}
 \lim_{n \to \infty } [\widetilde{\boldsymbol{A}}^{j \shortrightarrow j+n}-v\mathds{1}_K^\top]_{\ell k} \log \frac{L_\ell (\bxi_{\ell,j} | \theta )}{L_\ell (\bxi_{\ell,j} | \theta^\circ)}  \asceq 0.
\end{align}
This implies for the convergence of the Ces{\`a}ro mean \cite{cesaro1888convergence} that
\begin{align}
 \lim_{t \to \infty } \frac{1}{t} \sum_{n=0}^{t-1} [\widetilde{\boldsymbol{A}}^{j \shortrightarrow j+n}-v\mathds{1}_K^\top]_{\ell k} \log \frac{L_\ell (\bxi_{\ell,j} | \theta )}{L_\ell (\bxi_{\ell,j} | \theta^\circ)}  \asceq 0. 
\end{align}
By using $j=i-n$, this can alternatively be written as
\begin{align}
 \lim_{t \to \infty } \frac{1}{t} \sum_{n=0}^{t-1} [\widetilde{\boldsymbol{A}}^{i-n \shortrightarrow i}-v\mathds{1}_K^\top]_{\ell k} \log \frac{L_\ell (\bxi_{\ell,i-n} | \theta )}{L_\ell (\bxi_{\ell,i-n} | \theta^\circ)}  \asceq 0. 
\end{align}
Since this holds for any time instant $i\geq t$ (which ensures $j\geq 1$), we can set $i=t$. By also changing the summation index $n$ to $j$, we arrive at
\begin{align}\label{eq:cesaro_mean_as}
   \lim_{i \to \infty} \frac{1}{i} \sum_{j=1}^i[\widetilde{\boldsymbol{A}}^{j \shortrightarrow i}-v\mathds{1}_K^\top]_{\ell k} \log \frac{L_\ell (\bxi_{\ell,j} | \theta )}{L_\ell (\bxi_{\ell,j} | \theta^\circ)}  \asceq 0.
\end{align} 
As a result, the first term in \eqref{eq:expanded_logbelief} can be written as:
\begin{align}
    \frac{1}{i} \sum_{j=1}^i& \sum_{\ell = 1}^K  [\widetilde{\boldsymbol{A}}^{j \shortrightarrow i}]_{\ell k} \log \frac{L_\ell (\bxi_{\ell,j} | \theta )}{L_\ell (\bxi_{\ell,j} | \theta^\circ)} \notag \\ &= \frac{1}{i} \sum_{j=1}^i \sum_{\ell = 1}^K  [\widetilde{\boldsymbol{A}}^{j \shortrightarrow i} - v\mathds{1}_K^\top]_{\ell k} \log \frac{L_\ell (\bxi_{\ell,j} | \theta )}{L_\ell (\bxi_{\ell,j} | \theta^\circ)} \notag \\ &\quad + \frac{1}{i} \sum_{j=1}^i \sum_{\ell = 1}^K  [v\mathds{1}_K^\top]_{\ell k} \log \frac{L_\ell (\bxi_{\ell,j} | \theta )}{L_\ell (\bxi_{\ell,j} | \theta^\circ)} \notag \\
    &= \frac{1}{i} \sum_{j=1}^i \sum_{\ell = 1}^K  [\widetilde{\boldsymbol{A}}^{j \shortrightarrow i} - v\mathds{1}_K^\top]_{\ell k} \log \frac{L_\ell (\bxi_{\ell,j} | \theta )}{L_\ell (\bxi_{\ell,j} | \theta^\circ)} \notag \\ &\quad + \frac{1}{i} \sum_{j=1}^i \sum_{k = 1}^K  v_k \log \frac{L_k (\bxi_{k,j} | \theta )}{L_k (\bxi_{k,j} | \theta^\circ)} \notag \\
    &= \frac{1}{i} \sum_{j=1}^i \sum_{\ell = 1}^K  [\widetilde{\boldsymbol{A}}^{j \shortrightarrow i} - v\mathds{1}_K^\top]_{\ell k} \log \frac{L_\ell (\bxi_{\ell,j} | \theta )}{L_\ell (\bxi_{\ell,j} | \theta^\circ)} \notag \\ &\quad + \sum_{k = 1}^K  v_k \frac{1}{i} \sum_{j=1}^i  \log \frac{L_k (\bxi_{k,j} | \theta )}{L_k (\bxi_{k,j} | \theta^\circ)} \notag \\
    &\asc \sum_{k = 1}^K  -v_k \dkl\Big (  L_k (\cdot | \theta^\circ) \Big |\Big| L_k (\cdot | \theta) \Big )
\end{align}
where the last step follows from \eqref{eq:cesaro_mean_as} and the strong law of large numbers \cite[Chapter 7]{williams_1991}. Also, since by Lemma~\ref{lemma:matrix_convergence} \( \widetilde{\boldsymbol{A}}^{1 \shortrightarrow i}  \asc  v\mathds{1}_K^\top \) and by Assumption~\ref{as:positive_initial_beliefs} the initial beliefs are nonzero, it follows that
\begin{align}
    \frac{1}{i} \sum_{\ell = 1}^K  [\widetilde{\boldsymbol{A}}^{1 \shortrightarrow i}]_{\ell k} \log \frac{\bmu_{\ell,0}(\theta)}{\bmu_{\ell,0}(\theta^\circ)} & \asc 0.
\end{align}
The asymptotic convergence rate then becomes
\begin{align}
    \frac{1}{i} \log \frac{ \bmu_{k,i}(\theta)}{ \bmu_{k,i}(\theta^\circ)} \asc \sum_{k = 1}^K  -v_k \dkl\Big (  L_k (\cdot | \theta^\circ) \Big |\Big| L_k (\cdot | \theta) \Big ).
\end{align}
\qed

\section{Proof of Theorem~\ref{th:impossible_mislearning}}\label{ap:impossible_mislearning}
Since \( Q(\bmu_i) \) is a super-martingale (Lemma~\ref{lemma:submartingale}) and also non-negative (i.e., uniformly bounded from below), by Doob's forward martingale convergence theorem \cite[Chapter 11.5]{williams_1991}, there exists a finite random variable \( \bm{Q}_{\infty} \) such that, as \( i \to \infty\),
\begin{align}\label{eq:q_inf_intro}
    Q(\bmu_i) \asc \bm{Q}_{\infty}.
\end{align}
Since \( \bm{Q}_{\infty} \) is finite, it holds that:
\begin{align}
    &\lim_{i \to \infty }\sum_{k=1}^K v_k \log \bmu_{k,i}(\theta^\circ) > -\infty \notag \\
    &\Longrightarrow \liminf_{i \to \infty} \log \bmu_{k,i} (\theta^\circ) > -\infty, \quad \forall k\in\mathcal{N} \notag \\
    &\Longrightarrow \liminf_{i \to \infty}  \bmu_{k,i} (\theta^\circ) > 0 
\end{align}
with probability one.

\section{Auxiliary Results}

\subsection{Vanishing Matrix Norm}

\begin{lemma}\label{lemma:vanishing_norm} If the wrong hypothesis \( \theta\) is transmitted with positive probability, i.e., \( \pi_{\theta} > 0\), then, for any matrix norm induced by a vector norm,
\begin{align}
\e \Big [ \left \| (\widetilde{\boldsymbol{A}}^{j+1 \shortrightarrow i})^\top (I-A^\top) \right \| \Big] = \mathcal{O} (\widetilde{\lambda}^{i-j})
\end{align}
for a constant \( \widetilde{\lambda} \) that satisfies \( 0 \leq \widetilde{\lambda} < 1\).
\end{lemma}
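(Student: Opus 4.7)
The plan is to exploit two structural facts: first, that the identity matrix commutes with $A$, so the random matrix product can be reduced to a deterministic power of $A$ governed by a binomial count; and second, that powers of $A$ converge geometrically to the rank-one matrix $v\mathds{1}_K^\top$, which annihilates $(I-A^\top)$ on the left when transposed.

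\textbf{Step 1: Reduction to a power of $A$.} Since each $\widetilde{\boldsymbol{A}}_m \in \{A, I\}$ and $I$ commutes with $A$, the product telescopes into
\begin{equation}
\widetilde{\boldsymbol{A}}^{j+1 \shortrightarrow i} \;=\; A^{\boldsymbol{N}_{j,i}}, \qquad \boldsymbol{N}_{j,i} \;\triangleq\; \sum_{m=j+1}^{i} \mathbb{I}\{\btau_m = \theta\}.
\end{equation}
Because the trending hypotheses $\{\btau_m\}$ are i.i.d.\ with $\mathbb{P}(\btau_m = \theta) = \pi_\theta$, the random exponent is binomially distributed, $\boldsymbol{N}_{j,i} \sim \textup{Binomial}(i-j,\pi_\theta)$.

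\textbf{Step 2: Geometric contraction against $(I-A^\top)$.} Because $A$ is primitive and left-stochastic, the Perron--Frobenius theorem yields the exponential convergence $\|A^n - v\mathds{1}_K^\top\| \leq C \rho^{n}$ for some constant $C>0$ and some $\rho \in [0,1)$ (e.g.\ $\rho$ may be taken as the second-largest eigenvalue modulus of $A$). The key algebraic observation is that
\begin{equation}
(v\mathds{1}_K^\top)^\top (I - A^\top) \;=\; \mathds{1}_K (v - A v)^\top \;=\; 0,
\end{equation}
by the Perron equation $Av = v$. Consequently, for every integer $n \geq 0$,
\begin{equation}
\bigl\| (A^n)^\top (I-A^\top) \bigr\| \;=\; \bigl\| (A^n - v\mathds{1}_K^\top)^\top (I-A^\top) \bigr\| \;\leq\; C\,\|I-A^\top\|\,\rho^n \;\triangleq\; C' \rho^n.
\end{equation}

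\textbf{Step 3: Averaging over the binomial exponent.} Combining the two preceding steps and using the moment generating function of a binomial,
\begin{align}
\e \Big [ \bigl\| (\widetilde{\boldsymbol{A}}^{j+1 \shortrightarrow i})^\top (I-A^\top) \bigr\| \Big ]
&\leq C' \, \e\bigl[\rho^{\boldsymbol{N}_{j,i}}\bigr] \notag \\
&= C' \bigl(\pi_\theta \rho + 1-\pi_\theta\bigr)^{i-j} \notag \\
&= C' \bigl(1 - \pi_\theta(1-\rho)\bigr)^{i-j}.
\end{align}
Setting $\widetilde{\lambda} \triangleq 1 - \pi_\theta(1-\rho)$, the assumption $\pi_\theta > 0$ together with $\rho < 1$ gives $\pi_\theta(1-\rho) \in (0,1]$, hence $\widetilde{\lambda} \in [0,1)$, which delivers the claimed $\mathcal{O}(\widetilde{\lambda}^{i-j})$ bound.

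The only subtlety worth flagging is Step 1: it relies essentially on the binary nature of $\widetilde{\boldsymbol{A}}_m$ (either $A$ or $I$) so that the factors commute. If the paper later allows more than one nontrivial combination matrix, the cute collapse $A^{\boldsymbol{N}_{j,i}}$ is lost and one would need a Jointly Slowly Varying / joint spectral radius argument instead. Under the present i.i.d.\ binary model, however, the reduction to a binomial MGF makes the rest of the bound essentially a one-line computation.
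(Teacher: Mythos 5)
Your proof is correct and follows essentially the same route as the paper's: collapse the i.i.d.\ binary product to $A^{\boldsymbol{N}}$ with $\boldsymbol{N}\sim\textup{Binomial}(i-j,\pi_\theta)$, bound the deterministic factor geometrically, and evaluate the binomial moment generating function to obtain $\widetilde{\lambda}=1-(1-\lambda)\pi_\theta$. The only cosmetic difference is in Step 2: the paper invokes the bound $\|A^{m}-A^{m+1}\|\leq C\lambda^{m}(1-\lambda)$ directly from Horn--Johnson, whereas you derive the same geometric decay from $\|A^{m}-v\mathds{1}_K^\top\|=\mathcal{O}(\rho^{m})$ together with the annihilation $\mathds{1}_K v^\top(I-A^\top)=0$ and submultiplicativity; both are fine, and the transposition of the induced norm only affects constants (absorbed by norm equivalence into the $\mathcal{O}$), a point the paper glosses over in the same way.
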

\begin{proof}
Define the time difference \( n \triangleq i-j \). Since the matrices \( \widetilde{\boldsymbol{A}}_i \) are i.i.d. binary random variables over time, as defined by \eqref{eq:ai_binary_rv}, for time differences \( 0 \leq m \leq n \), we get:
\begin{align}\label{eq:binomial_probability}
    \mathbb{P} \Big (\widetilde{\boldsymbol{A}}^{j+1 \shortrightarrow i} = A^m \Big ) = \binom{n}{m} (\pi_{\theta})^m (1-\pi_{\theta})^{n-m}.
\end{align}
Moreover, since \( A \) is a primitive stochastic matrix, for consecutive time instants, there exists a non-negative constant \( \lambda < 1 \) such that \cite[Eq. (8.2.10)]{horn2012matrix}:
\begin{align}\label{eq:eigenvalue_decrease}
    \| A^{m} - A^{m+1} \| \leq C \lambda^{m} (1-\lambda )
\end{align}
where \( C \) is a constant independent of \( m \). Then, it follows that
\begin{align}
    \e \Big [\| & (\widetilde{\boldsymbol{A}}^{j+1 \shortrightarrow i})^\top (I-A^\top)  \| \Big] \notag \\ &= \sum_{m=0}^n \mathbb{P}(\widetilde{\boldsymbol{A}}^{j+1 \shortrightarrow i} = A^m) \| A^{m} - A^{m+1} \|  \notag \\
    &\stackrel{\eqref{eq:eigenvalue_decrease}}{\leq} \sum_{m=0}^n \mathbb{P}(\widetilde{\boldsymbol{A}}^{j+1 \shortrightarrow i} = A^m) C \lambda^{m} (1-\lambda ) \notag \\
    &\stackrel{\eqref{eq:binomial_probability}}{=}C (1-\lambda) \sum_{m=0}^n \binom{n}{m} (\pi_{\theta})^m (1-\pi_{\theta})^{n-m}  \lambda^{m} \notag \\
    &=C (1-\lambda)  \Big (\lambda \pi_{\theta} + (1-\pi_{\theta})  \Big )^n \notag \\
    &=C (1-\lambda) \Big (1- (1-\lambda)\pi_{\theta} \Big )^n = \mathcal{O} (\widetilde{\lambda}^n)
\end{align}
where \( \widetilde{\lambda} \triangleq 1- (1-\lambda)\pi_{\theta} \), which is a constant strictly smaller than \( 1 \) as long as \( \pi_{\theta} > 0 \).
\end{proof}

\subsection{Finiteness of the Residual Sum}\label{ap:residual}
\begin{lemma}~\label{lemma:residual}As \( i \to \infty \), if $\pi_\theta > 0$, then, under Assumption~\ref{as:positive_initial_beliefs}
\begin{align}
\frac{1}{i}\sum_{j=1}^i \boldsymbol{R}_j^k \asc 0.
\end{align}
\end{lemma}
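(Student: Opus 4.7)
The plan is to bound $\e[|\boldsymbol{R}_j^k|]$ geometrically in $i-j$ by extracting the operator $(I-A)$ from the random matrix product, and then deduce the Ces\`aro vanishing.

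First, I would algebraically simplify the inner sum of $\boldsymbol{R}_j^k$ using the left-stochasticity of $A$ together with the fact that $A$ commutes with $\widetilde{\boldsymbol{A}}^{j+1\shortrightarrow i}$ (both being polynomials in $A$). Splitting $\log\tfrac{\bpsi_{\ell,j}(\theta^\circ)}{\bpsi_{m,j}(\theta^\circ)}$ and exploiting $\sum_m a_{m\ell}=1$, the double sum collapses to $\sum_\ell [(I-A)\widetilde{\boldsymbol{A}}^{j+1\shortrightarrow i}]_{\ell k}\log\bpsi_{\ell,j}(\theta^\circ)$, which exposes the operator $(I-A)$ acting on the random matrix product. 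Bounding the outer indicator factor by $1$ gives
\begin{equation*}
|\boldsymbol{R}_j^k|\leq \sum_\ell \left|[(I-A)\widetilde{\boldsymbol{A}}^{j+1\shortrightarrow i}]_{\ell k}\right|\cdot\left|\log\bpsi_{\ell,j}(\theta^\circ)\right|.
\end{equation*}

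Next, I would take expectations and exploit the independence of $\widetilde{\boldsymbol{A}}^{j+1\shortrightarrow i}$ (a function only of $\btau_{j+1},\dots,\btau_i$) from the intermediate beliefs $\bpsi_{\ell,j}$ (a function only of events up to time $j$). The matrix factor is controlled by Lemma~\ref{lemma:vanishing_norm}, giving $\e\|(I-A)\widetilde{\boldsymbol{A}}^{j+1\shortrightarrow i}\|=\mathcal{O}(\widetilde{\lambda}^{i-j})$ for some $\widetilde{\lambda}<1$ whenever $\pi_\theta>0$. For the log-belief factor, the super-martingale property of $Q$ from Lemma~\ref{lemma:submartingale} yields $\e[-\log\bmu_{\ell,j-1}(\theta^\circ)]\leq Q(\bmu_0)/v_\ell$ uniformly in $j$, and combining with Jensen's inequality applied to the normalization of the adapt step \eqref{eq:alg_adapt} produces $\e|\log\bpsi_{\ell,j}(\theta^\circ)|\leq C'$ uniformly in $j$. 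Putting these together, $\e[|\boldsymbol{R}_j^k|]\leq C\widetilde{\lambda}^{i-j}$; summing over $j$, $\e\big|\sum_{j=1}^i \boldsymbol{R}_j^k\big|\leq C/(1-\widetilde{\lambda})$ uniformly in $i$, so $\tfrac{1}{i}\sum_{j=1}^i \boldsymbol{R}_j^k\to 0$ in $L^1$ and in probability.

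The hard part will be strengthening this to almost-sure convergence, since the naive Markov bound $\mathbb{P}(|S_i/i|>\epsilon)=\mathcal{O}(1/i)$ is not summable for a direct Borel--Cantelli argument. My preferred route is to establish a second-moment analog $\e[|\boldsymbol{R}_j^k|^2]=\mathcal{O}((\widetilde{\lambda}')^{i-j})$: the binomial computation inside the proof of Lemma~\ref{lemma:vanishing_norm} can be repeated with $\|A^m-A^{m+1}\|^2$ in place of $\|A^m-A^{m+1}\|$, yielding $\e\|(I-A)\widetilde{\boldsymbol{A}}^{j+1\shortrightarrow i}\|^2=\mathcal{O}((\widetilde{\lambda}')^{i-j})$ with $\widetilde{\lambda}'=1-\pi_\theta(1-\lambda^2)<1$. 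The technical crux is then to control $\e[(\log\bpsi_{\ell,j}(\theta^\circ))^2]$ uniformly in $j$, which can be obtained by exploiting the a.s.\ convergence of $Q(\bmu_i)$ to the finite limit $\bm{Q}_\infty$ guaranteed by Doob's forward convergence theorem, together with integrability arguments on the log-likelihood ratios. With these in hand, Cauchy--Schwarz on $S_i^2=\big(\sum_j \boldsymbol{R}_j^k\big)^2$ gives $\e[S_i^2]=\mathcal{O}(1)$, so that $\sum_i \mathbb{P}(|S_i/i|>\epsilon)\leq \sum_i \mathcal{O}(1/i^2)<\infty$, and Borel--Cantelli concludes $\tfrac{1}{i}\sum_{j=1}^i \boldsymbol{R}_j^k\asc 0$.
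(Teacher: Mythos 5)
Your algebraic reduction (exposing the operator $I-A$ acting on the random product) and your first-moment estimate are fine, and they reproduce two of the paper's ingredients: the geometric bound $\e\|(\widetilde{\boldsymbol{A}}^{j+1\shortrightarrow i})^\top(I-A^\top)\|=\mathcal{O}(\widetilde{\lambda}^{\,i-j})$ and a uniform-in-$j$ first-moment control of $-\log\bpsi_{\ell,j}(\theta^\circ)$ via the (super/sub)-martingale structure. This gives $L^1$ and in-probability convergence of the Ces\`aro mean. But the lemma asserts \emph{almost-sure} convergence, and your proposed upgrade has a genuine gap: you claim a uniform bound on $\e[(\log\bpsi_{\ell,j}(\theta^\circ))^2]$ "by exploiting the a.s.\ convergence of $Q(\bmu_i)$ to $\bm{Q}_\infty$." Almost-sure convergence provides no moment control at all, and, more fundamentally, the paper's only integrability hypothesis is finiteness of the KL divergences in \eqref{eq:assumption_finite_kl}, i.e.\ a \emph{first}-moment condition on the log-likelihood ratios. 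Second moments of $\log\frac{L_\ell(\bxi_{\ell,j}|\theta)}{L_\ell(\bxi_{\ell,j}|\theta^\circ)}$, and hence of $\log\bpsi_{\ell,j}(\theta^\circ)$ (already at $j=1$, since $-\log\bpsi_{\ell,1}(\theta^\circ)$ dominates $\log\frac{L_\ell(\bxi_{\ell,1}|\theta)}{L_\ell(\bxi_{\ell,1}|\theta^\circ)}$ up to an additive constant), need not be finite under the stated assumptions. So the Chebyshev/Borel--Cantelli step built on $\e[S_i^2]=\mathcal{O}(1)$ cannot be justified without importing an extra variance assumption that the paper does not make.

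The repair --- and the paper's actual route --- is pathwise rather than moment-based, and it needs no second moments. First, combine your expectation bound on the matrix factor with Markov's inequality and the \emph{first} Borel--Cantelli lemma to conclude $\|(\widetilde{\boldsymbol{A}}^{j+1\shortrightarrow j+n})^\top(I-A^\top)\|\asc 0$ as $n\to\infty$. Second, instead of bounding moments of the belief factor, show that $\sum_{k}v_k\log\bpsi_{k,i}(\theta^\circ)$ is a non-positive sub-martingale (the same computation as in your first-moment step); by Doob's convergence theorem it converges a.s.\ to a finite limit, which gives the \emph{pathwise} bound $\sup_{j\geq 1}\|\boldsymbol{\Psi}_j\|\stackrel{\text{a.s.}}{<}\infty$, where $\boldsymbol{\Psi}_j$ collects $\{\log\bpsi_{k,j}(\theta^\circ)\}_k$ (finiteness at each finite $j$ follows from Assumption~\ref{as:positive_initial_beliefs}). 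Then $|\boldsymbol{R}_j^k|\leq\|(\widetilde{\boldsymbol{A}}^{j+1\shortrightarrow i})^\top(I-A^\top)\|\cdot\sup_{j}\|\boldsymbol{\Psi}_j\|$ holds samplewise, and a deterministic Ces\`aro-mean argument along each sample path yields $\frac{1}{i}\sum_{j=1}^i\boldsymbol{R}_j^k\asc 0$. In short: keep your decomposition and your geometric matrix estimate, but replace the second-moment strengthening by the a.s.\ vanishing of the matrix norm plus the a.s.\ uniform (random but finite) bound on the log-intermediate beliefs.
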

\begin{proof}
First, we aggregate the residue terms and log-intermediate beliefs from across the network into the following vectors:
\begin{align}
    \boldsymbol{R}_j \triangleq \text{col} \{ \boldsymbol{R}_j^k \}_{k=1}^K, \quad 
   \boldsymbol{\Psi}_{i} \triangleq \text{col} \{\log \bpsi_{k,i}(\theta^\circ)\}_{k=1}^K.
\end{align}
By using these definitions, expression \eqref{eq:expanded_rjk} can be transformed into the following vector equation form:
\begin{align}\label{eq:rj_definition}
     \boldsymbol{R}_j =  (\mathbb{I} \{ \btau_j = \theta^\circ \}-\mathbb{I} \{ \btau_j = \theta \})  (\widetilde{\boldsymbol{A}}^{j+1 \shortrightarrow i})^\top  (  \boldsymbol{\Psi}_{j} - A^\top \boldsymbol{\Psi}_{j}).
\end{align}
To bound the terms in this expression, we start by using Lemma~\ref{lemma:vanishing_norm} and Markov's inequality to obtain:
\begin{align}
    \mathbb{P} \Big ( \| ( \widetilde{\boldsymbol{A}}^{j+1 \shortrightarrow j+n})^\top (I-A^\top) \| \geq \epsilon \Big ) \leq \frac{C (1-\lambda) \widetilde{\lambda}^n}{\epsilon},
\end{align}
where recall that we use \( n = i-j \). Since \( \widetilde{\lambda} < 1 \), it holds that
\begin{align}
   \sum_{n=0}^\infty \frac{C (1-\lambda) \widetilde{\lambda}^n}{\epsilon} < \infty.
\end{align}
Then, the first Borel-Cantelli Lemma \cite[Chapter 2.7]{williams_1991} implies:
\begin{align}\label{eq:n_limit_matrix_as}
 \lim_{n \to \infty}  \Big \| (\widetilde{\boldsymbol{A}}^{j+1 \shortrightarrow j+n})^\top (I-A^\top) \Big \| \asceq 0,
\end{align}
 for any value of $j$. Moreover, if we bound the norm of \eqref{eq:rj_definition}, it holds almost surely that
\begin{align}
    & \| \boldsymbol{R}_j \| \notag \\&=  \Big \| (\mathbb{I} \{ \btau_j = \theta^\circ \}-\mathbb{I} \{ \btau_j = \theta \})  (\widetilde{\boldsymbol{A}}^{j+1 \shortrightarrow i})^\top  (  \boldsymbol{\Psi}_{j} - A^\top \boldsymbol{\Psi}_{j}) \Big \| \notag \\
     &\stackrel{(a)}{\leq} \Big \|\mathbb{I} \{ \btau_j = \theta^\circ \}-\mathbb{I} \{ \btau_j = \theta \} \Big \| \Big \|  (\widetilde{\boldsymbol{A}}^{j+1 \shortrightarrow i})^\top (I-A^\top) \Big \| \Big  \| \boldsymbol{\Psi}_{j} \Big \| \notag \\
     &\leq \Big \|  (\widetilde{\boldsymbol{A}}^{j+1 \shortrightarrow i})^\top (I-A^\top) \Big \| \Big  \| \boldsymbol{\Psi}_{j} \Big \| \notag \\
      &\stackrel{(b)}{\leq} \Big \|  (\widetilde{\boldsymbol{A}}^{j+1 \shortrightarrow i})^\top (I-A^\top) \Big \| \boldsymbol{\Psi} \label{eq:rj_bound_psi},
\end{align}
where \( (a) \) follows from the sub-multiplicity of the norm, and \((b) \) follows from the definition
\begin{equation}\label{eq:psi_definition}
   \boldsymbol{\Psi} \triangleq \sup_{j\geq 1} \| \boldsymbol{\Psi}_{j} \| ,
\end{equation}
which is shown to be finite under Assumption~\ref{as:positive_initial_beliefs} in Appendix~\ref{ap:auxiliary_res}. Subsequently, the norm of the Ces{\`a}ro mean satisfies
\begin{align}\label{eq:cesaro_mean_prev}
\Big \| \frac{1}{i}\sum_{j=1}^i \boldsymbol{R}_j \Big \| &\leq  \frac{1}{i}\sum_{j=1}^{i} \Big \|\boldsymbol{R}_j \Big \| \notag \\
&\stackrel{\eqref{eq:rj_bound_psi}}{\leq} \boldsymbol{\Psi} \frac{1}{i}\sum_{j=1}^{i} \Big \|  (\widetilde{\boldsymbol{A}}^{j+1 \shortrightarrow i})^\top (I-A^\top) \Big \| .
\end{align}
Observe that \eqref{eq:n_limit_matrix_as} can alternatively be written as (by using the definition $n=i-j$)
\begin{align}
 \lim_{n \to \infty}  \Big \| (\widetilde{\boldsymbol{A}}^{i-n+1 \shortrightarrow i})^\top (I-A^\top) \Big \| \asceq 0.
\end{align}
 As a result, the Ces{\`a}ro mean satisfies
\begin{align}
    \lim_{t \to \infty} \frac{1}{t}\sum_{n=0}^{t-1} \Big \|  (\widetilde{\boldsymbol{A}}^{i-n+1 \shortrightarrow i})^\top (I-A^\top) \Big \| \asceq 0,
\end{align}
for any $i\geq t$ (so that $j=i-n\geq 1$). If we set $i=t$, and change the indices from $n$ to $j=i-n$, we get
\begin{equation}
\lim_{i \to \infty} \frac{1}{i}\sum_{j=1}^{i} \Big \|  (\widetilde{\boldsymbol{A}}^{j+1 \shortrightarrow i})^\top (I-A^\top) \Big \|
    \asceq 0.
\end{equation}
Incorporating this into \eqref{eq:cesaro_mean_prev}, we conclude that, as $i \to \infty$,
\begin{align}
   \Big \| \frac{1}{i}\sum_{j=1}^i \boldsymbol{R}_j \Big \| \asc 0, \quad \Longrightarrow   \frac{1}{i}\sum_{j=1}^i \boldsymbol{R}_j^k  \asc 0.
\end{align}
\end{proof}

\subsection{Convergence of the Matrix Product}\label{ap:matrix_convergence}
\begin{lemma}~\label{lemma:matrix_convergence}For a fixed time instant $j$, if $\pi_\theta > 0$, then
\begin{align}
   \widetilde{\boldsymbol{A}}^{j \shortrightarrow i} \asc  v\mathds{1}_K^\top
\end{align}
as \( i \to \infty\).
\end{lemma}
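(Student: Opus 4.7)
The plan is to exploit the binary structure of $\widetilde{\boldsymbol{A}}_i$: it takes only one of two values, $A$ or $I$, and these two values commute. Because of this, the entire random product telescopes into a single (random) power of $A$. Concretely, I would first observe that
\begin{equation*}
\widetilde{\boldsymbol{A}}^{j \shortrightarrow i} = A^{\bm{M}_{j,i}}, \qquad \bm{M}_{j,i} \triangleq \sum_{k=j}^{i} \mathbb{I}\{\btau_k = \theta\},
\end{equation*}
so that the randomness of a product of $(i-j+1)$ matrices collapses onto a single scalar exponent counting how many factors are equal to $A$ rather than $I$.

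Next I would handle that scalar exponent. Since $\{\mathbb{I}\{\btau_k = \theta\}\}$ is an i.i.d. Bernoulli sequence with success probability $\pi_\theta > 0$, the strong law of large numbers gives
\begin{equation*}
\frac{\bm{M}_{j,i}}{i-j+1} \asc \pi_\theta > 0,
\end{equation*}
and therefore $\bm{M}_{j,i} \asc \infty$ as $i \to \infty$. Separately, because $A$ is primitive and left-stochastic with Perron vector $v$ satisfying \eqref{eq:perron_def}, the deterministic limit $A^m \to v\mathds{1}_K^\top$ holds as the integer exponent $m \to \infty$ (see, e.g., \cite[Chapter 8]{horn2012matrix}).

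Finally I would combine these two ingredients pathwise. On any sample path in the probability-one event $\{\bm{M}_{j,i} \to \infty\}$, the deterministic convergence $A^m \to v\mathds{1}_K^\top$ applied along the (path-dependent) subsequence $m = \bm{M}_{j,i}$ yields $A^{\bm{M}_{j,i}} \to v\mathds{1}_K^\top$, establishing the claim. There is no real obstacle here; the whole argument hinges on the commutativity observation, and once that reduction is in place the remaining steps are standard. A mild alternative route — useful as a sanity check — is to invoke Lemma~\ref{lemma:vanishing_norm} with Markov's inequality and the first Borel--Cantelli lemma to get $(\widetilde{\boldsymbol{A}}^{j \shortrightarrow i})^\top(I - A^\top) \asc 0$, which together with column-stochasticity pins down the almost-sure limit to a rank-one matrix with column $v$; but the commutativity argument above is substantially shorter.
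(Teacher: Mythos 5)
Your proposal is correct and follows essentially the same route as the paper: collapse the random product to a power $A^{\bm{M}_{j,i}}$ of the primitive matrix (since the $I$ factors are inert), show the exponent diverges almost surely, and invoke $A^m \to v\mathds{1}_K^\top$. The only cosmetic difference is that you establish the divergence of the exponent via the strong law of large numbers, whereas the paper uses the second Borel--Cantelli lemma to show $\widetilde{\boldsymbol{A}}_i = A$ infinitely often; both are valid and interchangeable here.
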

\begin{proof}
Recall from \eqref{eq:ai_binary_rv} that \( \widetilde{\boldsymbol{A}}_i \) is a binary random variable and let \( E_i \) denote the event that \(\widetilde{\boldsymbol{A}}_i = A \). Since \( \{ E_i \}_{i=1}^\infty  \) are independent events across time and their probability of occurrence satisfies $\pi_\theta >0$, it holds that
\begin{align}
    \sum_{i=1}^\infty \mathbb{P}(E_i) = \sum_{i=1}^\infty \pi_\theta = \infty .
\end{align}
Subsequently, by the second Borel-Cantelli Lemma \cite[Chapter 4.3]{williams_1991}, we conclude that
\begin{align}\label{eq:ei_occurs_inf}
    \mathbb{P}(E_i \hspace{0.25em} \text{occurs for infinitely many} \hspace{0.25em}  i) = 1.
\end{align}
Notice that in the product of random matrices
\begin{align}
    \widetilde{\boldsymbol{A}}^{j \shortrightarrow i} = \widetilde{\boldsymbol{A}}_j\widetilde{\boldsymbol{A}}_{j+1}\dots \widetilde{\boldsymbol{A}}_{i-1}\widetilde{\boldsymbol{A}}_i,
\end{align}
the realization of random matrices will either be equal to \( A \) or the identity matrix \( I \). Multiplication with identity matrices has no effect on the product and by \eqref{eq:ei_occurs_inf}, there will be infinitely many \( A \)'s in the product with probability one, as \( n = i-j \to \infty \). Thus, if we define $A^\infty \triangleq \lim_{i \to \infty} A^i$,
\begin{align}
    \mathbb{P} (\lim_{n \to \infty} \widetilde{\boldsymbol{A}}^{j \shortrightarrow j+n} = A^\infty) = 1.
\end{align}
Finally, since $A^\infty = v\mathds{1}_K^\top $ due to $A$ being a primitive matrix, we conclude that:
\begin{align}\label{eq:random_matrix_as}
    \lim_{n \to \infty} \widetilde{\boldsymbol{A}}^{j \shortrightarrow j+n} \asceq  v\mathds{1}_K^\top
\end{align}
for any time instant $j\geq 1$.
\end{proof}

\subsection{Uniform Boundedness}\label{ap:auxiliary_res}
In this section, we show that, under Assumption~\ref{as:positive_initial_beliefs}
\begin{equation}\label{eq:psi_finiteness}
   \boldsymbol{\Psi} = \sup_{j\geq 1} \| \boldsymbol{\Psi}_{j} \| \stackrel{\text{a.s.}}{<} \infty .
\end{equation}
For that purpose, first, we show that Lemma~\ref{lemma:submartingale} implies that \( \sum_{k=1}^K  v_{ k} \log \bpsi_{k,i}(\theta^\circ)  \) is a sub-martingale. To see this, observe that by \eqref{eq:martingale_data},
\begin{equation}
    \e_{\xi_i} \Big [ \sum_{k=1}^K  v_{ k} \log \bpsi_{k,i}(\theta^\circ)  \Big ] \geq \sum_{k=1}^K  v_{ k} \log \bmu_{k,i-1}(\theta^\circ) ,
\end{equation}
and in turn, by \eqref{eq:inner_exp},
\begin{equation}
    \e_{\tau_{i-1}} \Big [ \sum_{k=1}^K  v_{k}  \log \bmu_{k,i-1}(\theta^\circ) \Big ] \geq \sum_{k=1}^K  v_{ k} \log \bpsi_{k,i-1}(\theta^\circ).
\end{equation}
Since it is also a non-positive sub-martingale, it converges to a finite limit almost surely \cite[Chapter 11.5]{williams_1991}, which means that as \( i \to \infty \),
\begin{align}\label{eq:norm_as_finite}
    \sum_{k=1}^K  v_{ k} \log \bpsi_{k,i}(\theta^\circ) &\stackrel{\text{a.s.}}{>} -\infty \notag \\
    \Longrightarrow \forall k \in \mathcal{N}, \quad \log \bpsi_{k,i}(\theta^\circ)  &\stackrel{\text{a.s.}}{>} -\infty \notag \\
    \Longrightarrow \sum_{k=1}^K -\log \bpsi_{k,i}(\theta^\circ)  &\stackrel{\text{a.s.}}{<} \infty \notag \\
    \Longrightarrow \| \boldsymbol{\Psi}_{i} \| &\stackrel{\text{a.s.}}{<} \infty.
\end{align}
In addition, for any finite time instant \( j \), it is true that \(  \bpsi_{k,i} (\theta^\circ)> 0 \) for each agent $k$ because of the following reasons. Due to Assumption~\ref{as:positive_initial_beliefs}, the initial beliefs are positive. Moreover, the likelihood at the true hypothesis by definition cannot be zero for emitted observations. Furthermore, the geometric combination rule results in the intersection of the supports of its arguments \cite{koliander2022}. Consequently, for any time instant \( j \), it is true that \(  \| \boldsymbol{\Psi}_{j} \| < \infty \). Combining this with \eqref{eq:norm_as_finite} establishes \eqref{eq:psi_finiteness}.

\bibliographystyle{IEEEbib}
\bibliography{refs}

\end{document}